\theoremstyle{plain}% Theorem-like structures provided by amsthm.sty
\newtheorem{theorem}{Theorem}[section]
\newtheorem{proposition}[theorem]{Proposition}
\theoremstyle{definition}
\theoremstyle{remark}
\newcommand{\unif}{{\sf Unif}}
\newcommand{\nm}{{\sf N}}
\newcommand{\gam}{{\sf Gamma}}
\newcommand{\sign}{\mathrm{sign}}
\newcommand{\RR}{\mathbb{R}}
\newcommand{\GG}{\mathbb{G}}
\newcommand{\eps}{\varepsilon}
\theoremstyle{plain}
\newtheorem{lem}{Lemma}
\newtheorem{cond*}{Condition}
\newtheorem{assump}{Assumption}[section]
\title{Robust posterior inference for Youden's index cutoff}
\author{Nicholas Syring\footnote{Department of Statistics, Iowa State University; {\tt nsyring@iastate.edu}. }}
\date{\today}
\begin{document}

\maketitle 

\begin{abstract}
Youden's index cutoff is a classifier mapping a patient's diagnostic test outcome and available covariate information to a diagnostic category.  Typically the cutoff is estimated indirectly by first modeling the conditional distributions of test outcomes given diagnosis and then choosing the optimal cutoff for the estimated distributions.  Here we present a Gibbs posterior distribution for direct inference on the cutoff.  Our approach makes incorporating prior information about the cutoff much easier compared to existing methods, and does so without specifying probability models for the data, which may be misspecified.  The proposed Gibbs posterior distribution is robust with respect to data distributions, is supported by large-sample theory, and performs well in simulations compared to alternative Bayesian and bootstrap-based methods.  In addition, two real data sets are examined which illustrate the flexibility of the Gibbs posterior approach and its ability to utilize direct prior information about the cutoff.

\smallskip

\emph{Keywords and phrases:} credible intervals; Gibbs posterior distribution; model misspecification; prior distribution; Youden's index
\end{abstract}

\section{Introduction} 

Patient diagnosis is challenging and subject to many sources of random variation, e.g., physician discretion, diagnostic test variability, and patient characteristics to name a few. Youden's index and its associated cutoff value are valuable tools that help doctors determine the best diagnostic testing procedure and evaluate its effectiveness. These tools are themselves subject to sampling variability so methods for statistical inference on Youden's index and cutoff could be helpful input to physicians making diagnostic decisions. 

For example, consider the relationship between performance on the Trail Making Test (TMT) and cognitive impairment (see Section~5).  The TMT is a two-part diagnostic test requiring test takers to draw lines connecting objects in the appropriate order, correctly associating alphanumeric labels.  Both the time it takes to complete the TMT and the number of errors made are helpful in determining a patient's level of cognitive impairment; see \citet{rasmussen.etal.1998}, \citet{amieva.etal.1998}, and \citet{ashendorf.etal.2008}.  Youden's index and cutoff can precisely quantify the relationship between TMT performance and cognitive impairment and may improve the accuracy of diagnoses. 

The cutoff value associated with Youden's index is a value of the diagnostic test that acts as a classifier, categorizing patients by diagnostic outcome into diagnostic categories.  Classifiers can be consistently estimated by minimizing empirical misclassification error, but such procedures have no ability to incorporate prior information, which may be well developed in a diagnostic setting.  When prior information is available a Bayesian approach is preferred, except classifiers are not easily modeled as parameters of probability distributions.  Convenient models---like logistic or probit regression---may not contain the true data-generating model, and Bayesian posterior distributions built from such misspecified models may produce biased inferences.

The literature on statistical inference for Youden's index and cutoff is plentiful, but so far has not adequately addressed the above issues.  Both parametric and nonparametric methods for point estimation of Youden's index were studied in \citet{fluss.etal.2005} using simulations.  \citet{xu.etal.2014} proposed a covariate-adjusted nonparametric M-estimation technique for Youden's index and cutoff.  Interval estimates have been addressed by many authors.  Both \citet{lai.etal.2012} and \citet{shan.2015} considered improved parametric interval estimates when the data is randomly sampled from a normal distribution.  \citet{yin.etal.2016} investigated improvements to interval estimation by utilizing a more advanced sampling methodology.  \citet{molanes.2011} introduced a nonparametric empirical likelihood method with good performance for confidence intervals, but their method is computationally intensive, requiring the use of the bootstrap, optimization, and kernel-smoothing steps.  \citet{wang.etal.2017} improved upon the empirical likelihood method and showed their resulting confidence intervals have better coverage probability and shorter average length than percentile bootstrap intervals.  All of the above references considered only two diagnostic categories whereas \citet{nakas.etal.2010} and \citet{nakas.etal.2013} defined extensions of Youden's index and cutoff to general ordered multi-group classification problems.  \citet{carvalho.2018} used a nonparametric Bayesian method for inference on Youden's index and cutoff for three diagnostic categories.  Despite its name, this method actually does model the data distributions, and their use of a nonparametric model means that, despite being a Bayesian method, their prior distribution is not able to be used to incorporate real prior information about Youden's index.  \citet{carvalho.2017} use a similar nonparametric Bayesian method for the covariate-adjusted two-category problem.   

We develop a Gibbs posterior distribution for inference on the cutoff in order to address the shortcomings of existing methods highlighted above.  The Gibbs posterior methodology combines the robustness of M-estimation with the Bayesian approach's seamless integration of prior information.  Our method accommodates covariates and any number of diagnostic categories.  We show our Gibbs posterior distribution is consistent and concentrates on the true cutoff at a fast rate without needing to correctly specify the data-generating mechanism.  And, we demonstrate the Gibbs posterior distribution's good performance relative to M-estimation and Bayesian inference in both simulation examples and two real data examples.  

A brief outline of this paper is as follows: Section~2 defines Youden's index and cutoff for $k$ diagnostic categories and discusses how to accommodate covariate information; Section~3 discusses nonparametric point estimation of the cutoff and introduces the Gibbs posterior distribution for the cutoff in order to incorporate prior information about the cutoff; Section~4 briefly highlights the theoretical support for inferences based on the Gibbs posterior distribution; Section~5 presents real data examples; Section~6 provides simulation examples; and, Section~7 makes concluding remarks.  Proofs of the results in Section~4 are presented in Appendices A and B.  Data and R codes for the examples presented in Section~5 can be accessed at \url{https://github.com/nasyring/GPC_YI}.

\section{Youden's index and cutoff}
\subsection{Multi-class setting}
\label{SS:YI}
Suppose diseased $(Y=1)$ and healthy $(Y=-1)$ individuals predictably respond to a diagnostic test with the diseased individuals tending to have larger continuous outcomes $X$. Mathematically, this means the conditional distributions of the test outcome given health are stochastically ordered; that is, $F_{-1}(x)>F_1(x)$ $\forall x$, where $F_{j}(x) = P(X\leq x|Y=j)$, $j=-1, 1$.  The purpose of the diagnostic test is to classify patients, or in other words, to test the hypothesis $H_0:Y=1$ versus $H_a:Y = -1$ by separating the diagnostic categories according to $Y = -1$ for $X <\theta$ and $Y = 1$ for $X>\theta$ based on an optimal cutoff $\theta$.  Denote the joint distribution of $(X,Y)$ by $P$, and in the sequel denote the product measure of a random sample of pairs $(X_i, Y_i)$ for $i=1, \ldots, n$ by $P^n$. The \textit{sensitivity} of the test is $1-F_{1}(\theta):=P(X > \theta,|\,Y=1)$ and the \textit{specificity} of the test is $F_{-1}(\theta) = P(X\leq\theta\,|\,Y=-1)$.  When Type I and Type II errors are equally important (or equally costly) then the best test is the one maximizing the sum of sensitivity and specificity.  That maximum is called Youden's index and the corresponding cutoff for the test is termed Youden's cutoff, or simply, the cutoff.  In this paper we consider only estimation and inference for the cutoff, the parameter to be used for diagnosis.  

In the multi-class setting there are $k$ diagnostic categories, $Y \in \{1,2,...,k\}$, and (possibly unequal) weights on Type I and II errors.  For an ordered $(k-1)-$dimensional vector ${\theta}=(\theta_1, \ldots, \theta_{k-1})^\top$ with $\theta_1<\cdots <\theta_{k-1}$ Youden's index and cutoff are defined as   
\begin{align*}
YI &= \max_{{\theta}} \sum_{j=1}^{k-1}w_jF_j(\theta_j)- (1-w_j)F_{j+1}(\theta_j);
\end{align*}
and, 
\begin{align}
\label{eq:R_fcn}
{\theta^\star}&=\arg\min_{{\theta}}\sum_{j=1}^{k-1}(1-w_j)F_{j+1}(\theta_j)-w_jF_j(\theta_j),
\end{align}
where $w_j \in (0,1)$; see also \citet{nakas.etal.2013}.  In the remainder we will assume the weights are equal and so may be omitted, but the statistical inference methods we discuss are easily extended to the situation with unequal weights.

The conditional probabilities $F_{j}(\theta):=P(X \leq \theta,\,Y=j)/P(Y=j)$ have two common interpretations, which have important consequences when it comes to estimation of the cutoff; see also \citet{xu.etal.2014} and Section~4. Let $1(\cdot)$ denote the indicator function and consider observed diagnoses $y_1, \ldots, y_n$.  Define $p_j:=P(Y=j)$.  Then, $p_j$ is either known or is estimated by the sample proportion $\hat p_j = \frac{1}{n}\sum_{i=1}^n 1(y_i = j)$. In case-control studies $p_j$ is known because the proportions of participants falling into each diagnostic category are set in advance, whereas in cohort studies the proportion of patients eventually diagnosed to category $j$ is not known when the experiment begins.    
\subsection{Covariate-adjusted cutoff}
\label{S:covariate}

Often times diagnostic results are strongly associated with patient characteristics in addition to diagnostic group. For example, the level of blood pressure reading indicating hypertension depends on age.  In order to account for covariate influence on diagnostic results we define a covariate-adjusted cutoff for $k=2$ diagnostic groups  analogously to our definition in Section~2.1.  The cutoff function $\theta^\star(z)$ of a continuous, vector covariate $z\in \mathbb R^p$ satisfies
\[\theta^\star( z) = \arg_\theta\min \{F_{1}(\theta( z))-F_{-1}(\theta( z))\}\]
for every $z$.  

There are many models of covariate information that could be used, including linear models $\theta^\star( z) =  z^\top \beta$ and additive models $\theta^\star( z) = \sum_{j=1}^p f_j(z_j)$ for smooth functions $f_j$, $j=1,\ldots,p$.  One could also consider a tree-based model, such as the two-covariate model $\theta^\star(z_1, z_2) = \theta_{ij}$ if $(z_1, z_2)\in (z_{1,i}, z_{1,i+1}]\times(z_{2,j}, z_{2,j+1}]$ for $i=1,\ldots,I$ and $j=1, \ldots, J$, and where $z_{1,1},\ldots,z_{1,I}$ and $z_{2,1},\ldots,z_{2,J}$ are splits of the covariates $z_1$ and $z_2$ partitioning the covariate space.  Our approach can accommodate any of these models of covariate information.  Here, we focus on a smooth function model $\theta^\star(z)$ for a single covariate $z\in [0,1]$, which has been used in practical applications; and see the examples in Sections 5.2 and 6.2.  For a specific notion of smoothness, consider the $\alpha-$H\"older smooth functions $\theta:[0,1]\mapsto \mathbb{R}$ satisfying for all $z,z' \in [0,1]$
\[ |\theta^{\star ([\alpha])}(z) - \theta^{\star ([\alpha])}(z')| \leq L |z - z'|^{\alpha - [\alpha]}, \]
where the superscript ``$(k)$'' means $k^\text{th}$ derivative and $[\alpha]$ is the integer part of $\alpha$.

In order to define a prior and Gibbs posterior distribution we need to choose a parametrization of $\theta(z)$.  It is convenient to represent $\theta(z)$ as a linear combination of basis functions, and in the examples that follow we use cubic b-splines defined as follows: 
\[\theta(z):=\sum_{j=1}^d \beta_j B_{j,d}(z),\]
where $d$ denotes the number of basis functions used, $B_{j,d}(z)$ denotes each of those $d$ basis functions for $j=1, \ldots, d$, and $\beta_j\in\mathbb{R}$ denotes the coefficient of the $j^{th}$ basis function in the linear combination.  For $d$ basis functions, a cubic b-spline is defined on a set of $d+4$ knots, denoted $t_{-3}\leq t_{-2}\leq t_{-1}\leq t_0=0< ...<t_{d-3}=1\leq t_{d-2}\leq t_{d-1}\leq t_d$, defining a grid in $\mathbb{R}$. Eight knots lie outside $(0,1)$ while $d-4$ knots line inside $(0,1)$.  For $z\in (0,1)$ the b-spline basis functions are defined recursively by
\begin{align*}
&B_{j,1}(z) = 1(z\in[t_j, t_{j+1}])\quad\text{and}\quad\\
&B_{j,\ell}(z) = \frac{z-t_j}{t_{j+\ell-1}-t_j}B_{j,\ell-1}(z)
+\frac{t_{j+\ell}-z}{t_{j+\ell}-t_{j+1}}B_{j+1,\ell-1}(z),
\end{align*}
for $\ell=2,\ldots,d$.

Assuming $\theta^\star(z)$ is an $\alpha-$H\"older smooth function it can be well--estimated by a linear combination of cubic b-splines.  Let ${\beta} := (\beta_1,...,\beta_d)^\top$ and ${B_d(z)} := (B_{1,d}(z),....B_{d,d}(z))^\top$.  Then, there exists a constant $H>0$ such that for every $d>0$ there exists a linear combination ${\theta(z)} = {\beta^\top B_d(z)}$ such that
\begin{equation}
    \label{eq:approx.smooth}
    \|\beta\|_\infty < H\text{ and }\|\theta - \theta^\star\|_\infty \leq Cd^{-\alpha},
\end{equation}
for some constant $C>0$.  \eqref{eq:approx.smooth} says that $\theta^\star$ can be well-approximated by a linear combination of cubic b-spline functions bounded by $H$.  Boundedness of ${\beta}$ is helpful, for example, because it means the prior distribution over coefficient vectors need not be too spread out.

\section{Robust posterior inference on the cutoff}
\label{S:estimation}
\subsection{Objective function}

Following our point of view laid out in the introduction, we now turn to M-estimation methods that avoid specifying probability models for the conditional distribution functions $F_j(\theta)$ of diagnostic outcomes.  Misspecification of these distributions can cause estimates to be biased, but the strategy presented in this section avoids bias by avoiding probability-model specification altogether. 

In the multi-class setting we estimate the cutoff ${\theta^\star}$ by the minimizer ${\hat\theta_n} := \arg\min R_n({\theta})$ of an objective function $R_n({\theta};Y,X)$ (abbreviated to $R_n({\theta})$).
The simplest choice of $R_n({\theta})$ is the empirical version of \eqref{eq:R_fcn} (omitting the equal weights),
\begin{equation}\label{eq:empirical_Rn}\begin{aligned}
R_n({\theta}) &= \frac{1}{n}\sum_{i=1}^n \sum_{j=1}^{k-1}\ell(\theta_j;y_i,x_i) \\
& = \frac{1}{n}\sum_{i=1}^n \sum_{j=1}^{k-1}\Biggl\{\frac{1(x_i\leq \theta_j, y_i=j+1)}{ p_{j+1}}-\frac{1(x_i\leq \theta_j, y_i=j)}{ p_j}\Biggr\}
\end{aligned}
\end{equation}
where every diagnostic class is assumed to appear in the data.  

In the covariate-adjusted setting for $k=2$ diagnostic categories we estimate $\theta^\star$ by the minimizer of the corresponding empirical objective function for a given number of basis functions $d$.  The minimizer, denoted $\hat\theta_n(z) :=$ ${\hat{{\beta}}_n}^\top{B_{d}(z)}$, is a linear combination of b-spline basis functions and satisfies
\begin{equation}
\label{eq:empirical_Rn_func}
\begin{aligned}
{\hat{{\beta}}_n} &:= \arg\min_{{\beta}}R_n({\beta}) \\
&:= \arg\min_{{\beta}}\frac1n\sum_{i=1}^n\ell({\beta}^\top {B_d(z_i)};y_i,x_i,z_i) \\
&=\arg\min_{{\beta}}  \frac{1}{n}\sum_{i=1}^n \Biggl\{\frac{1(x_i\leq {\beta}^\top {B_d(z_i)}, y_i=1)}{ p_{1}}-\frac{1(x_i\leq {\beta}^\top {B_d(z_i)}, y_i=-1)}{ p_{-1}}\Biggr\}.
\end{aligned}
\end{equation}
\subsection{Gibbs posterior distribution for the cutoff}
\label{S:gibbs}

Our primary goal is to provide some reliable summary of uncertainty about the cutoff ${\theta^\star}$.  One strategy towards achieving this is to estimate the cutoff by minimizing the objective function and, further, to bootstrap that process to compute a confidence region for ${\theta^\star}$.  This strategy has the advantage of utilizing the objective function in \eqref{eq:empirical_Rn} that defines the cutoff, instead of relying on a (possibly misspecified) probability model, but it cannot handle prior information.  If we have an informative prior distribution for the cutoff then we should combine that information with the data, and the Bayesian method is the usual strategy for accomplishing that combination.  To define a Bayesian posterior distribution for the cutoff we need a likelihood function, which is defined by selecting probability models for the observed diagnostic measurements in each category.  There are a couple of difficulties with this approach.  First, our likelihood will certainly be a function of some model parameters, and not directly of the cutoff ${\theta^\star}$, so any prior information we have about the cutoff would have to be translated into prior information about model parameters, and it is not obvious how that would be done, if it even can be.  Second, since these diagnostic data distributions are effectively nuisance parameters it is at least wasteful that we should have to model them, and, at worst, if we model them incorrectly we may incur bias in our inferences about ${\theta^\star}$. 

An alternative strategy is to define a Gibbs posterior distribution for the cutoff follows: the Gibbs posterior probability of a measureable set $A\subset \Theta$ is
\begin{equation}
\label{eq:gibbs}
\Pi_{n,\omega_n}(A) = \frac{\int_A \exp[-\omega_n nR_{n}({\theta})]\Pi(d{\theta})}{\int_\Theta \exp[-\omega_n nR_{n}({\theta})]\Pi(d{\theta})}
\end{equation}
for prior distribution $\Pi$, objective function $R_n({\theta})$, and a sequence $\omega_n>0$---often called the \emph{learning rate}---that provides a weighting of the objective function relative to the prior distribution.  The Gibbs posterior distribution uses a pseudo-likelihood function equal to an exponential transformation of the objective function in \eqref{eq:empirical_Rn}, so there is no need to model the data distributions.  Moreover, the only parameter appearing in the Gibbs posterior distribution is the cutoff ${\theta}$, so no marginalization or prior specification for nuisance model parameters is necessary.  There has been substantial work on pseudo-posterior distributions, of which Gibbs posteriors are a special case, but for this application the Gibbs posterior distribution, in particular, is a principled choice for inference.  Suppose $\Pi_n$ denotes a Bayesian posterior for ${\theta}$.  We would expect posterior draws ${\theta}\sim \Pi_n$ would, on average, nearly maximize the likelihood used to construct the posterior.  On the other hand, since ${\theta^\star}$ is defined as the minimizer of the expectation of the objective function given in \eqref{eq:empirical_Rn}, any reasonable posterior for ${\theta}$ should, on average, minimize $R_n({\theta};{y},{x})$, given data and posterior draws ${\theta}\sim \Pi_{n,\omega_n}$.     \citet{bissiri.2016} and \citet{zhang.2006} show the Gibbs posterior in \eqref{eq:gibbs} does exactly that---it minimizes the posterior expectation of the objective function over the class of measures defined on $\Theta$.  Further, \citet{bissiri.2016} show the Gibbs posterior is the unique solution to this minimization problem if we consider only coherent posterior distributions.  In that sense, the Gibbs posterior defined in \eqref{eq:gibbs} is the unique form of pseudo-posterior distribution based on the objective function in \eqref{eq:empirical_Rn} for inference on ${\theta^\star}$. 

An interesting feature of the Gibbs posterior distribution is its dependence on a user-specified learning rate $\omega_n$.  Generally, the effect of the learning rate is to make the Gibbs posterior either more spread out (small learning rate) or more sharply peaked at ${\theta^\star}$ (large learning rate).  In other words, $\omega_n$ has a substantial effect on the variance of the Gibbs posterior distribution and the frequentist coverage probability of its credible sets for ${\theta^\star}$.  It is possible to simply set $\omega_n = 1$ and ignore the learning rate, but, it can be helpful to tune the learning rate for better finite-sample performance.  This is precisely the strategy laid out in \citet{syring.martin.scaling}, which we employ in our examples that follow in Sections~5.1 and 6.1.  Roughly speaking, their method iteratively updates the learning rate $\omega_n$ until $95\%$ Gibbs posterior credible sets for ${\theta^\star}$ have approximately $95\%$ coverage with respect to a bootstrap sample of estimated ${\theta^\star}$ values.  Other learning rate tuning procedures are available; see, for example, the recent review paper \citet{wu.martin.2020}.      

To complete our specification of the Gibbs posterior distribution for the Youden index cutoff we need a prior distribution $\Pi$ on $\Theta$.  For the multi-class setting a flexible choice of prior distribution for the cutoff ${\theta}$ is a $(k-1)$-dimensional ordered independent Normal distribution.  Specifically, let ${\eta} \sim \nm_{k-1}({\mu}, {\Sigma})$ where ${\mu}$ is a $(k-1)$-dimensional mean vector and ${\Sigma}$ is a $(k-1)\times (k-1)$ diagonal covariance matrix.  Then, define the prior distribution for ${\theta}$ by taking  ${\theta} = (\eta_{(1)}, \ldots, \eta_{(k-1)})^\top$, that is, ${\theta}$ has the distribution of the order statistics of ${\eta}$.  For example, if $k=3$ diagnostic categories, then ${\theta} = (\theta_1, \theta_2)^\top$ is a $2$-dimensional cutoff vector with the density function of a bivariate normal distribution for ${\eta}=(\eta_1, \eta_2)^\top$ restricted and normalized to the half-space $\{\eta_2\geq \eta_1\}\subset \mathbb{R}^2$.  This type of prior distribution enforces the ordering of ${\theta}$ while allowing for either vague or informative prior information by varying the hyperparameters $\mu$ and $\Sigma$.  We emphasize that the nonparametric Bayesian method introduced by \citet{carvalho.2018} models the cutoff indirectly as a functional of a mixture distribution and cannot directly incorporate prior information about the cutoff.

We define the Gibbs posterior distribution for the smooth, covariate-adjusted cutoff $\theta^\star(z)$ using \eqref{eq:gibbs} with the objective function in \eqref{eq:empirical_Rn_func} parametrized by the vector parameter ${\beta}\in\mathbb{R}^d$.  \citet{shen.ghosal} provides guidance on prior distributions that provide good large-sample performance.  Their hierarchical random series prior distributions are composed of a marginal prior distribution on the number of basis functions $D$ and a conditional prior on $\beta$ given $D=d$.  Allowing the number of basis functions to vary can improve the fit of the curve $\theta(z)$ in practice, but requires more advanced posterior sampling algorithms, like reversible jump MCMC.  We opt for a fixed number of basis functions $d$ so that our Gibbs posterior distribution can be sampled using the standard Metropolis-Hastings-within-Gibbs algorithm.  The general theory in \citet{shen.ghosal} suggests using independent normal or independent exponential prior distributions on $\beta_j$ for $j=1, \ldots, d$.

\section{Theoretical support}
\label{S:theory}
In this section we discuss the large-sample properties of the Gibbs posterior distribution for the cutoff in the multi-class setting described in Section~\ref{SS:YI} and in the covariate-adjusted setting described in Section~\ref{S:covariate}.  Proofs of Theorems~\ref{thm:1} and \ref{thm:2} are deferred to the Web Appendices.
\subsection{Multi-class setting}
Theorem~\ref{thm:1} below says that as long as the prior distribution places sufficient mass near ${\theta^\star}$, then the Gibbs posterior distribution concentrates in $P^n-$probability on the true cutoff $\theta^\star$ at a rate determined by the smoothness of the conditional CDFs $F_j$ evaluated at $\theta^\star_j$ for $j=1,\ldots, k-1$.
\begin{assump}\hspace{1cm}\\\vspace{-.5cm}
\label{assump:thm1}
\begin{enumerate}
    \item[i. ] Let $L(\eps):=\{\theta: \|\theta - {\theta^\star}\|< \eps\}$.  The prior distribution $\Pi$ has density $\pi$ strictly bounded away from zero on $L(\eps)$ for all sufficiently small $\eps>0$.
\item[ii. ] Define $R({\theta}):= E\left[\sum_{j=1}^{k-1}\ell(\theta_j;Y,X)\right]$.  There exist positive constants $\gamma>1/2$ and $\gamma\geq\eta>0$ such that for all small enough $\eps>0$ 
\begin{equation}
    \label{eq:ident1}
    \|{\theta}-{\theta^\star}\| > \eps \Longrightarrow   R({\theta}) - R({\theta^\star}) \gtrsim \eps^{\gamma};
\end{equation}
and, 
\begin{equation}
\label{eq:ident2}
\begin{aligned}
    \|{\theta}-{\theta^\star}\| < \eps \Longrightarrow&   \sum_{j=1}^{k-1}\biggl\{|F_{j+1}(\theta_j) - F_{j+1}(\theta^\star_j)|+ |F_{j}(\theta_j) - F_{j}(\theta^\star_j)|\biggr\} \lesssim \eps^{\eta}.
\end{aligned}
\end{equation}
\end{enumerate}
\end{assump}
\begin{theorem}
\label{thm:1}
Let $M_n > 0$ be a positive sequence and let $\eps_n$ be a vanishing sequence satisfying $M_n\eps_n\rightarrow 0$, $n^{-1/2}(M_n\eps_n)^{1/2-\gamma}\rightarrow 0$, and $n(M_n\eps_n)^\gamma\{\log M_n\eps_n\}^{-1}\rightarrow\infty$.  Define $A_n := \{\theta:\|\theta - {\theta^\star}\| > M_n\eps_n\}$.  If Assumption~\ref{assump:thm1} holds then
\begin{itemize}
    \item[a) ] for known $p_j$ $j=1, \ldots, k-1$ in (3); or,
    \item[b) ] for unknown $p_j$ estimated by sample proportions $\hat p_j$, and $\eps_n$ satisfying $\eps_n^\gamma \gtrsim n^{-1/2}$, 
\end{itemize} the Gibbs posterior probability of $A_n$ vanishes in $P^n-$probability as $n\rightarrow\infty$. \end{theorem}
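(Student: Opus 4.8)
The plan is to treat the Gibbs posterior mass of $A_n$ as a ratio and bound its numerator and denominator separately. Writing out \eqref{eq:gibbs} and cancelling the common factor $\exp[-\omega_n n R_n({\theta^\star})]$, we obtain
\[
\Pi_{n,\omega_n}(A_n) = \frac{\int_{A_n} \exp[-\omega_n n\, r_n({\theta})]\,\Pi(d{\theta})}{\int_\Theta \exp[-\omega_n n\, r_n({\theta})]\,\Pi(d{\theta})} =: \frac{N_n}{D_n},
\]
where $r_n({\theta}) := R_n({\theta}) - R_n({\theta^\star})$ is the empirical excess risk, with $E[r_n({\theta})] = R({\theta}) - R({\theta^\star}) =: r({\theta})$. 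The whole argument rests on three facts: $r({\theta})$ is bounded below on $A_n$ by the separation in \eqref{eq:ident1}; $r({\theta})$ is small near ${\theta^\star}$ by the smoothness in \eqref{eq:ident2}; and the centered empirical process $G_n({\theta}) := r_n({\theta}) - r({\theta})$ is controlled. Two structural features make the last point tractable: $\ell$ is uniformly bounded (by $1/\min_j p_j$), and the indicators $\{x \mapsto 1(x \le \theta_j)\}$ form a VC class, so empirical-process maximal inequalities apply.

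For the denominator I would restrict the integral to a ball $K_n := L(\delta_n)$, with $\delta_n$ chosen small relative to the separation scale. Jensen's inequality applied to the convex map $t \mapsto e^{-t}$ gives
\[
D_n \ge \int_{K_n}\exp[-\omega_n n\, r_n({\theta})]\,\Pi(d{\theta}) \ge \Pi(K_n)\exp[-\omega_n n\,\bar r_n],
\]
where $\bar r_n$ is the $\Pi$-average of $r_n$ over $K_n$. Its mean is the $\Pi$-average of $r({\theta})$, which is $\lesssim \delta_n^{\eta}$ by \eqref{eq:ident2}; a Bernstein/variance bound (again using the variance estimate from \eqref{eq:ident2} and boundedness of $\ell$) shows $\bar r_n$ concentrates on that scale with $P^n$-probability tending to one. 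Assumption~\ref{assump:thm1}(i) forces $\Pi(K_n) \gtrsim \delta_n^{\,k-1}$, so $D_n \gtrsim \delta_n^{\,k-1}\exp[-C\omega_n n\,\delta_n^{\eta}]$ with high probability, and choosing $\delta_n$ small enough makes $\delta_n^{\eta}$ negligible next to $(M_n\eps_n)^{\gamma}$.

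For the numerator I would peel $A_n$ (intersected with a compact parameter set) into dyadic shells $S_\ell = \{{\theta}: 2^\ell M_n\eps_n < \|{\theta}-{\theta^\star}\| \le 2^{\ell+1}M_n\eps_n\}$. On $S_\ell$ the separation \eqref{eq:ident1} gives $r({\theta}) \gtrsim \rho_\ell^{\gamma}$ with $\rho_\ell := 2^\ell M_n\eps_n$, so that
\[
\int_{S_\ell}\exp[-\omega_n n\, r_n({\theta})]\,\Pi(d{\theta}) \le \Pi(S_\ell)\exp\!\Big[-\omega_n n\big(c\,\rho_\ell^{\gamma} - \textstyle\sup_{{\theta}\in S_\ell}|G_n({\theta})|\big)\Big].
\]
The shell-wise supremum of $G_n$ I would bound by a maximal inequality for the VC indicator class, with per-observation variance envelope $\lesssim \rho_\ell^{\eta}$ supplied by \eqref{eq:ident2} and range controlled by boundedness of $\ell$. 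Because $\omega_n n$ multiplies both the separation $\rho_\ell^{\gamma}$ and the fluctuation inside the exponent, the learning rate cancels from the leading balance, which is why the stated rate conditions carry no $\omega_n$: $n^{-1/2}(M_n\eps_n)^{1/2-\gamma}\to0$ is exactly the signal-to-noise requirement that $c\,\rho^{\gamma}$ dominate the empirical fluctuation at the critical radius $\rho = M_n\eps_n$, while summing over the $\asymp |\log(M_n\eps_n)|$ shells and dividing by $D_n$ (which contributes the prior-mass ratios $\Pi(S_\ell)/\Pi(K_n)$, a polynomial in $M_n\eps_n$) needs $n(M_n\eps_n)^{\gamma}\{\log M_n\eps_n\}^{-1}\to\infty$. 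A final application of Markov's inequality to the $P^n$-expectation of the peeled numerator bound, combined with the high-probability denominator bound, delivers $\Pi_{n,\omega_n}(A_n)\to0$ in $P^n$-probability.

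The main obstacle is the uniform control of $G_n$: a single Hoeffding bound over all of $A_n$ is far too lossy to beat the vanishing separation, so the fluctuation bound must be simultaneously uniform over the unbounded range of scales in $A_n$ and variance-calibrated (via \eqref{eq:ident2}) near the critical radius; the peeling argument together with the VC structure is what reconciles these demands. Part (a), with known $p_j$, is exactly the argument above. For part (b) I would write the objective built from $\hat p_j$ as the one built from $p_j$ plus a perturbation; since $\max_j|\hat p_j - p_j| = O_{P}(n^{-1/2})$ and $\ell$ is bounded, this perturbation is $O_P(n^{-1/2})$ uniformly in ${\theta}$, and the extra hypothesis $\eps_n^{\gamma}\gtrsim n^{-1/2}$ guarantees it is swamped by the separation $c(M_n\eps_n)^{\gamma}$, so the same bounds carry over.
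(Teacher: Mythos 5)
Your proposal is correct and follows essentially the same route as the paper: the same numerator/denominator split after cancelling $e^{-\omega n R_n(\theta^\star)}$, the same dyadic-shell peeling with a localized maximal inequality and per-shell Markov bounds for the numerator (the paper's Lemma~\ref{lem:num}, which follows van der Vaart's Theorem~5.52 and Wong--Shen exactly as you describe), a prior-mass-times-exponential lower bound for the denominator (the paper invokes Lemma~1 of Syring and Martin (2021), which your Jensen-plus-Bernstein argument simply re-derives, with the same shrinking-ball choice and the same $(k-1)$-power prior-mass bound from Assumption~\ref{assump:thm1}~i.), and the identical uniform $O_P(n^{-1/2})$ perturbation argument for part~b). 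The only divergence is tooling, not architecture: the paper obtains the localized fluctuation bound $E\sup_{\|\theta-\theta^\star\|<\delta}|\GG_n(\ell_\theta-\ell_{\theta^\star})|\lesssim\delta^{1/2}$ via a bracketing-number computation (Lemmas~\ref{lem:bracket_bound}--\ref{lem:maximal}) rather than your VC class with variance envelope $\delta^{\eta}$ from \eqref{eq:ident2}---note that your calibration reproduces the theorem's condition $n^{-1/2}(M_n\eps_n)^{1/2-\gamma}\to 0$ only when the effective variance exponent is $1$, whereas the paper's bracketing envelope delivers the $\delta^{1/2}$ scale directly.
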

Assumption~\ref{assump:thm1} ii. relates to the smoothness of the conditional CDFs  $F_j$ and $F_{j+1}$ in neighborhoods of $\theta^\star_j$.  In the regular case the conditional CDFs $F_j$, $j=1, \ldots, k-1$, are twice-differentiable, and because they are stochastically ordered the corresponding densities $f_j$ and $f_{j+1}$ intersect at only one value, which equals $\theta_j^\star$.  A Taylor expansion of $R(\theta)$ at ${\theta^\star}$ implies $\|\theta - {\theta^\star}\|_2^2\lesssim R(\theta)-R({\theta^\star}) \lesssim \|\theta - {\theta^\star}\|_2^2$, and the Taylor expansion along with the Cauchy-Schwarz inequality implies Assumption~\ref{assump:thm1} ii. holds with $\gamma = 2$ and $\eta = 1$. When every $p_j$ $j=1, \ldots, k-1$ is known Theorem~\ref{thm:1} holds for $M_n = \log(n)$ and $\eps_n=n^{-1/3}$ and when the probabilities $p_j$ are unknown the theorem holds for any diverging $M_n$, such as $\log\log n$, and $\eps_n = n^{-1/4}$. 

\subsection{Covariate-adjusted cutoff}
Theorem~\ref{thm:2} below establishes consistency of the Gibbs posterior distribution for inference on the smooth, covariate-adjusted cutoff function $\theta^\star(z)$ when $\theta^\star(z)$ is H\"older smooth with known exponent $\alpha$.    
\begin{assump}\hspace{1cm}\vspace{-5mm}\\
\label{assump:thm2}
\begin{enumerate}
\item[i.] For the same constant $H>0$ as in (2), for some constant $C>0$, and for all sufficiently small $\eps > 0$
\begin{equation}
\label{eq:prior.beta}
\Pi(\{\beta:\|\beta-{\beta'}\|_2 \leq \eps\}) \gtrsim e^{-Cd\log(1/\eps)},
\end{equation}
for all ${\beta'} \in \RR^d$ with $\|{\beta'}\|_\infty \leq H$;
\item[ii.] The diagnostic measure $X$ and patient covariate $Z$ have marginal densities $f$ on $\mathcal{X}$ and $g$ on $[0,1]$ bounded away from zero and $\infty$.
\item[iii.] The cutoff function $\theta^\star(z)$ is H\"older smooth with given exponent $\alpha$:
\[ |\theta^{\star ([\alpha])}(z) - \theta^{\star ([\alpha])}(z')| \leq L |z - z'|^{\alpha - [\alpha]}. \]
\end{enumerate}
\end{assump}
For a $d-$vector $\beta$ define $\|{\beta^\top B_d}\|:=\int_0^1 |{\beta^\top B_d(z)}|dz$.  Define ${\beta_d^\star}$ to be any $d-$vector satisfying \eqref{eq:approx.smooth}.
\begin{theorem}
\label{thm:2}
For any fixed $\eps>0$ define $A_n(d):=\{\beta\in\mathbb{R}^d: \|{\beta^\top B_d} - \theta^\star\| > \eps\}$.  If Assumption~\ref{assump:thm2} holds, then for every sufficiently large $d>0$ the Gibbs posterior probability $\Pi_n[A_n(d)]\rightarrow0$ in $P^n-$probability as $n\rightarrow\infty$.    
\end{theorem}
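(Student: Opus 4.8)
The plan is to prove concentration by the standard Gibbs-posterior ratio argument: write $\Pi_n[A_n(d)]$ as a quotient of integrals and, after dividing numerator and denominator by $\exp[\omega_n n R_n(\beta_d^\star)]$ (so that the exponents become $R_n(\beta)-R_n(\beta_d^\star)$), bound the numerator from above and the denominator from below. The key structural fact I would exploit is that here $d$ is held fixed and taken large, so $\beta$ lives in the fixed finite-dimensional space $\RR^d$ and the spline $\beta^\top B_d$ ranges over a fixed finite-dimensional function space; this finite-dimensionality is what makes both the identifiability step and the empirical-process step tractable.

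First I would establish a population-level separation inequality: there is a constant $c(\eps)>0$ such that
\[ \|\beta^\top B_d - \theta^\star\| > \eps \;\Longrightarrow\; R(\beta^\top B_d) - R(\theta^\star) \geq c(\eps), \]
where $R(\theta) := E[\ell(\theta(Z);Y,X,Z)]$ and $\theta^\star$ minimizes $R$ pointwise in $z$. Writing the excess risk as $\int_0^1 \{\psi_z(\beta^\top B_d(z)) - \psi_z(\theta^\star(z))\}\, g(z)\,dz$ with nonnegative pointwise integrand $\psi_z$, I would split into the cases $\|\beta\|$ bounded and $\|\beta\|$ large. For bounded $\|\beta\|$ the claim follows from compactness, continuity of $\beta \mapsto R(\beta^\top B_d)$ (expectation smooths the indicators because $X$ has a density by Assumption~\ref{assump:thm2} ii), and uniqueness of the pointwise minimizer. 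For large $\|\beta\|$, the compact support of the b-spline basis together with norm equivalence on the fixed finite-dimensional spline space forces $|\beta^\top B_d|$ to be large on a set of $z$ of measure bounded below, on which the pointwise excess risk exceeds a positive constant; since $g$ is bounded away from zero this again yields the lower bound. This dichotomy rules out deviations concentrated on vanishingly small sets, which is the subtle point in passing from the $L^1$ separation defining $A_n(d)$ to a separation in risk.

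Next I would control the centered empirical process uniformly over $\RR^d$. The class $\{(x,z)\mapsto 1(x \le \beta^\top B_d(z)) : \beta \in \RR^d\}$ consists of subgraph indicators of functions drawn from a $d$-dimensional linear space and is therefore a VC class of index $O(d)$, so standard maximal inequalities give $\sup_{\beta \in \RR^d} |R_n(\beta)-R(\beta)| = O_P(\sqrt{d/n})$, which vanishes for fixed $d$. Combining this with the separation inequality, and using the approximation bound \eqref{eq:approx.smooth} to make the bias $R(\beta_d^\star)-R(\theta^\star)$ smaller than $c(\eps)/4$ once $d$ is large, shows that with $P^n$-probability tending to one $\inf_{\beta \in A_n(d)}\{R_n(\beta)-R_n(\beta_d^\star)\} \ge c(\eps)/2$, whence the numerator is at most $\exp[-\omega_n n\, c(\eps)/2]$.

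Finally, for the denominator I would restrict the integral to a fixed small ball $\{\|\beta - \beta_d^\star\|_2 \le \rho\}$, on which smoothness of $\beta \mapsto R(\beta^\top B_d)$ gives $R_n(\beta)-R_n(\beta_d^\star) \le C\rho + o_P(1)$ uniformly, and apply the prior-mass bound \eqref{eq:prior.beta}, yielding a lower bound $e^{-\omega_n n (C\rho + o_P(1))} e^{-Cd\log(1/\rho)}$. Taking the quotient,
\[ \Pi_n[A_n(d)] \le e^{Cd\log(1/\rho)}\exp\!\big[-\omega_n n\,(c(\eps)/2 - C\rho - o_P(1))\big], \]
and choosing $\rho$ so small that $C\rho < c(\eps)/4$ makes the exponent diverge to $-\infty$ (since $\omega_n n \to \infty$), while the prefactor $e^{Cd\log(1/\rho)}$ is a fixed constant for fixed $d$; hence $\Pi_n[A_n(d)] \to 0$ in $P^n$-probability. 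I expect the separation inequality of the second paragraph to be the main obstacle, as it is where the finite-dimensional geometry of the b-spline space must be used carefully to convert an $L^1$-norm gap into a risk gap; the empirical-process and denominator steps are comparatively routine precisely because $d$ is held fixed.
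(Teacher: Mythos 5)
Your overall strategy is the same as the paper's. The paper packages your ratio argument as a generic consistency result (Proposition~\ref{prop:gibbs_cons}) and verifies the three conditions of Assumption~\ref{assump:cons}; your three steps correspond exactly to them: uniform control of $\sup_\beta |R_n(\beta)-R(\beta)|$ via a VC/subgraph argument (the paper uses the VC inequality with the Sauer--Shelah lemma for the class $\sign(\beta^\top B_d(z)-x)$, of VC index $O(d)$), a risk-separation inequality, and a prior-mass lower bound from \eqref{eq:prior.beta} combined with the comparison $\|\beta^\top B_d - {\beta_d^\star}^\top B_d\|_\infty \lesssim d\|\beta-\beta_d^\star\|_2$. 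Your denominator computation is the paper's prior-mass step in slightly different clothing, and your handling of a fixed large $d$ (so that all constants and the prefactor $e^{Cd\log(1/\rho)}$ are harmless) matches the structure of the theorem.

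The one genuine soft spot is how you reconcile the separation constant with the spline-approximation bias. Your $c(\eps)$ is produced by compactness in $\RR^d$ plus a large-$\|\beta\|$ dichotomy, so it is really $c(\eps,d)$ with no quantitative control in $d$; the claim that \eqref{eq:approx.smooth} makes $R(\beta_d^\star)-R(\theta^\star) \le c(\eps)/4$ ``once $d$ is large'' is therefore unjustified as stated, since the bias $Cd^{-\alpha}$ and the constant $c(\eps,d)$ both vary with $d$ and nothing in your argument prevents $c(\eps,d)$ from decaying faster. The paper sidesteps this entirely by recentering at $\beta_d^\star$ \emph{before} any risk comparison: by the triangle inequality and \eqref{eq:approx.smooth}, $A_n(d)\subset\{\beta: \|\beta^\top B_d - {\beta_d^\star}^\top B_d\| > \eps/2\}$ for $d$ large, and Assumption~\ref{assump:thm2}~ii then yields a separation linear in the $L_1$ distance, $R(\beta)-R(\beta_d^\star) \gtrsim \|\beta^\top B_d - {\beta_d^\star}^\top B_d\|$, with constants coming only from the density bounds and hence free of $d$ --- so $\theta^\star$ never enters the risk comparison at all. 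Your proof is repaired either by adopting that recentering, or by establishing a $d$-free separation from a pointwise excess-risk lower bound $h(|t-\theta^\star(z)|)$ combined with your norm-equivalence argument on the spline space (your large-$\|\beta\|$ step also implicitly needs the pointwise excess risk at $|t|\to\infty$, i.e.\ the Youden index at each $z$, bounded away from zero uniformly in $z$ --- worth stating explicitly). A final minor point: the theorem uses a fixed learning rate $\omega$, so your appeal to $\omega_n n \to\infty$ is automatic but the $n$-dependent notation is superfluous.
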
 
\section{Examples}
\label{S:examples}
\subsection{Dementia diagnosis}
\label{SS:TMT}
We revisit the analysis of the three-class TMT data in \citet{carvalho.2018} using the proposed Gibbs posterior distribution for inference on the cutoff $\theta = (\theta_1, \theta_2)^\top$.  The data set contains $245$ total observations of time it took a participant to complete the TMT.  The data is split into three categories: $170$ unimpaired subjects, $52$ with mild cognitive impairment, and $23$ with dementia.  We randomly split the data into two sets, one for analysis and one to act as prior data to illustrate the impact of an informative prior distribution on inference for the cutoff ${\theta^\star}$.  Using the set serving as prior data we minimize the empirical objective function $R_n(\theta)$ and estimate the  cutoff to be ${\hat\theta} = (51.03,\,72.67)^\top$, which will be used as the mean of an informative ordered independent normal prior distribution for $\theta$.  For the prior standard deviations, we use $1000$ bootstrap resamples of the prior data to compute $95\%$ bootstrap percentile intervals.  Then we set the prior standard deviations to one fourth the width of $95\%$ bootstrap percentile interval estimates, $\frac{1}{4}(\hat\theta^B_{j,(975)}- \,\hat\theta^B_{j,(25)})$ for $j=1, 2$, using $1000$ bootstrap estimates $((\hat\theta^B_{1,1}, \hat\theta^B_{2,1}),\ldots,(\hat\theta^B_{1,1000}, \hat\theta^B_{2,1000}))$.  These informative prior standard deviations equal $(3.63, 6.81)$.  For comparison we also use a vague normal prior with standard deviations equal to $20$. 

We sample the Gibbs posterior for $\theta$ using the remaining half of the data.  The variance of the Gibbs posterior distribution is influenced by the learning rate $\omega_n$, and we use the GPC algorithm \citep{syring.martin.scaling} to choose a data-dependent learning rate $\omega_n$ that helps calibrate posterior credible intervals.  We compare the $95\%$ Gibbs posterior credible intervals for $\theta$ using both the informative and vague prior distribution to the bootstrap percentile intervals and the Dirichlet process mixture model of \citet{carvalho.2018}; these intervals are summarized in Table~\ref{tbl:TMT}.  Boxplots of the posterior samples and the bootstrapped M-estimates are displayed in Figure~\ref{fig:TMT}.  The informative prior distribution substantially concentrates the Gibbs posterior distribution compared with the Gibbs posterior that uses a vague prior, and provides the shortest interval estimates among the four methods.

\begin{figure}
    \centering
    \includegraphics[width = 0.45\textwidth]{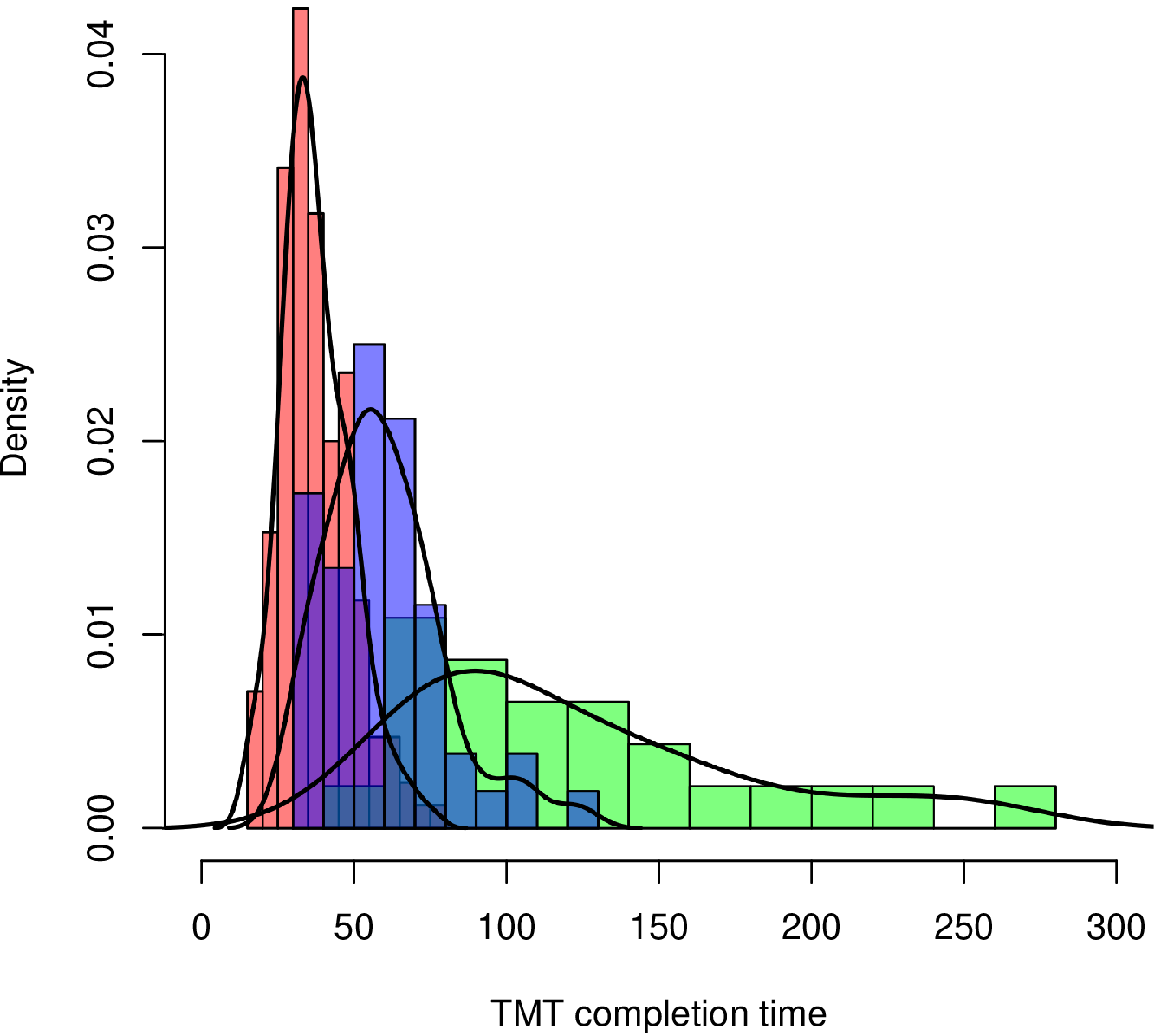}
      \includegraphics[width = 0.45\textwidth]{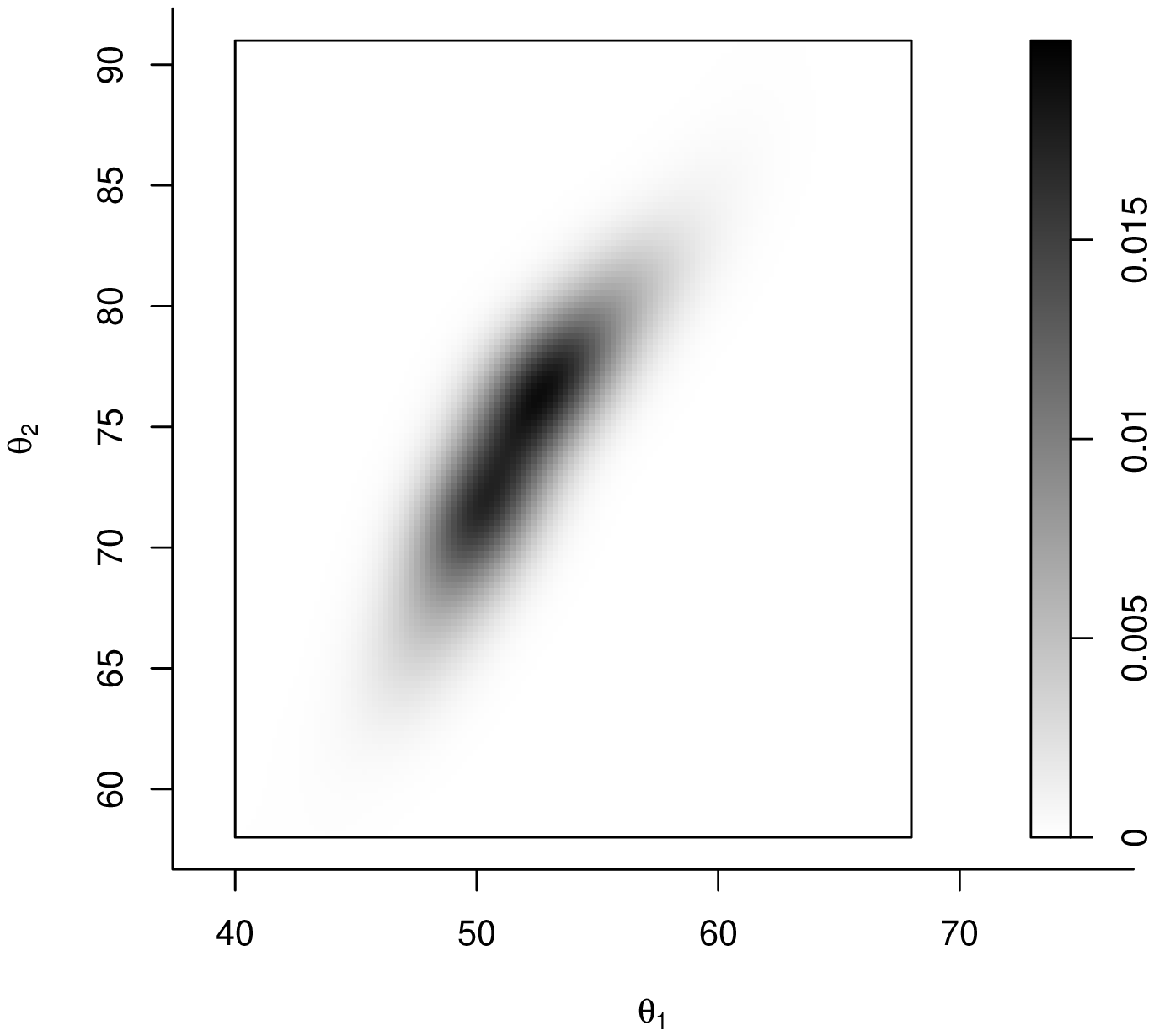}
    \includegraphics[width = 0.45\textwidth]{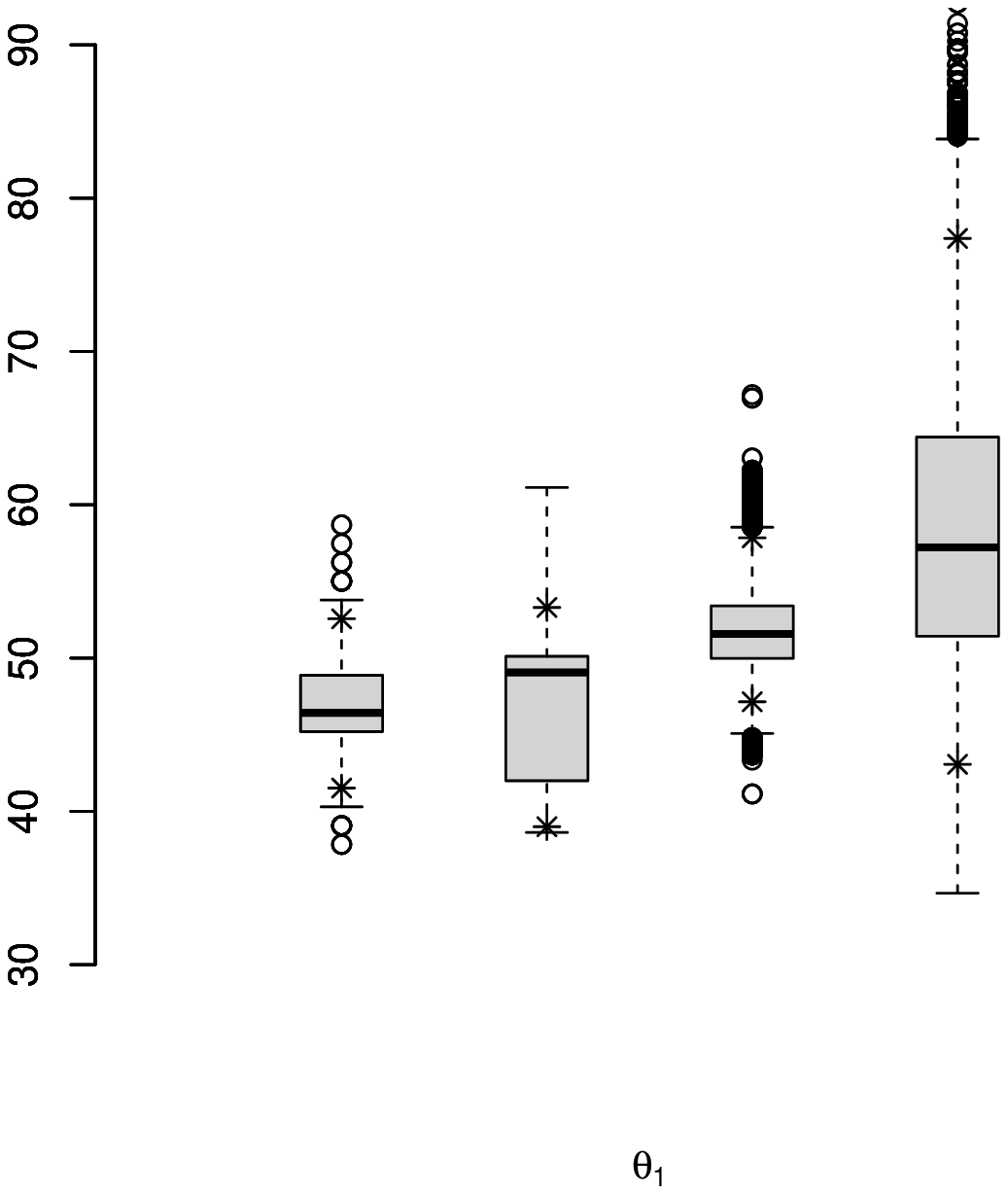}
    \includegraphics[width = 0.45\textwidth]{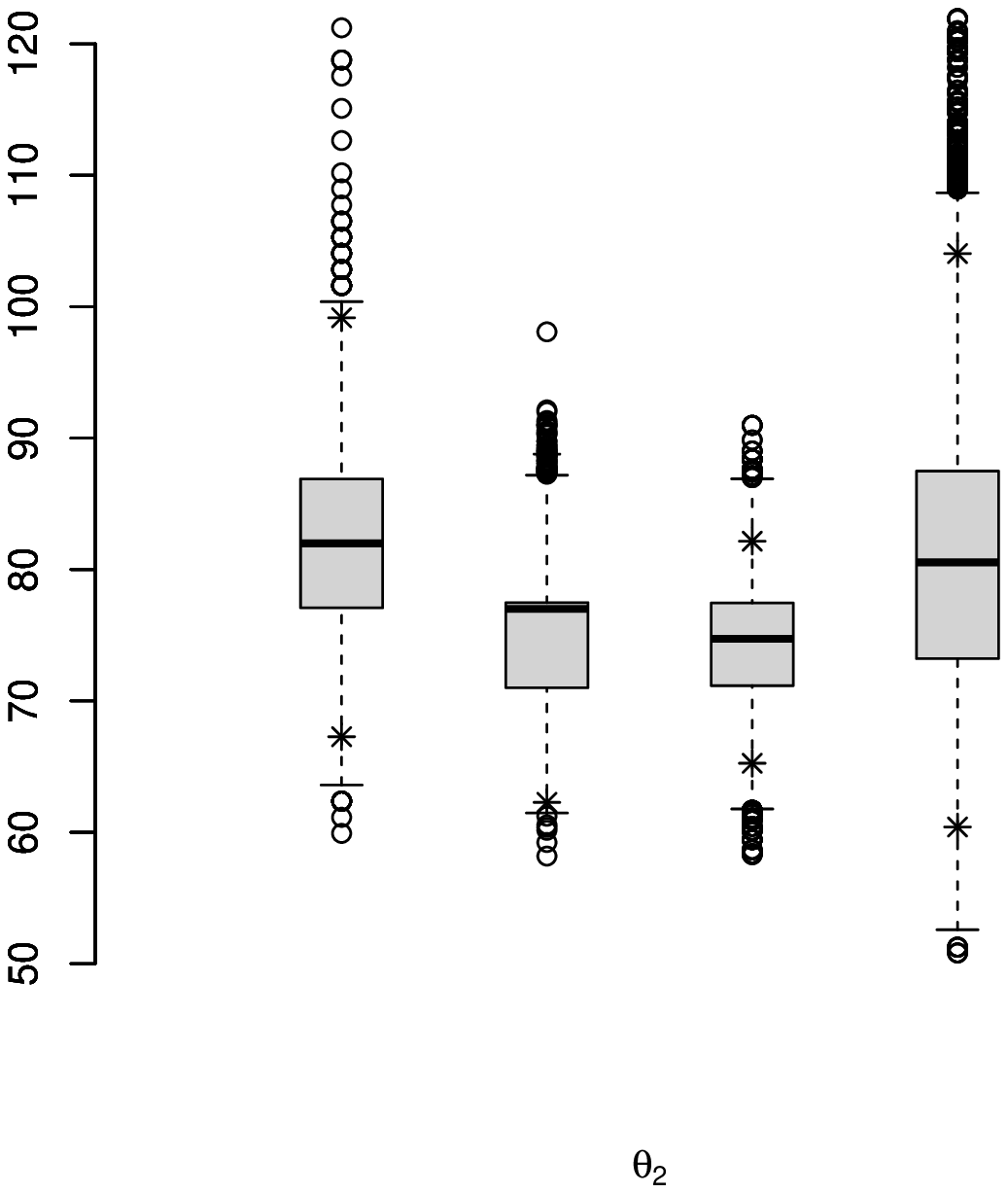}
    \caption{Top left: histograms and kernel density estimates of TMT completion times by diagnostic group for the TMT data analyzed in Section~5.1. Top right: heat map of Gibbs posterior density of the cutoff $\theta$ for the TMT data. Bottom left: boxplots of posterior samples and bootstrapped M-estimates of the first component of the cutoff $\theta_1$.  From left to right: Bayesian posterior samples, M-estimates, Gibbs posterior samples using informative prior distribution, and Gibbs posterior samples using vague prior distribution.  Asterisks indicate $95\%$ interval estimates.  Bottom right: same as bottom left for the second cutoff $\theta_2$.}
    \label{fig:TMT}
\end{figure}
\begin{table}[h]
\centering
\begin{tabular}{lcc}\toprule
                               & $\theta_1$       & $\theta_2$       \\ \toprule
\multicolumn{1}{l}{Bootstrap} & $(39.00, 53.50)$ & $(62.14, 89.40)$ \\
\multicolumn{1}{l}{Bayes}     & $(41.52, 52.56)$ & $(67.27, 99.15)$ \\
\multicolumn{1}{l}{Gibbs GPC*} & $(47.28, 58.33)$ & $(62.26, 87.74)$\\
\multicolumn{1}{l}{Gibbs GPC} & $(43.07,77.37)$ & $(60.40, 104.04)$\\
\bottomrule
\end{tabular}
\caption{$95\%$ interval estimates for the optimal TMT completion time cutoff $\theta$ using the percentile bootstrap intervals for the M-estimator, the Bayesian posterior, the Gibbs posterior based on an informative prior distribution (indicated by $*$), and the Gibbs posterior based on a vague prior distribution.}
\label{tbl:TMT}
\end{table}
\subsection{Covariate-adjusted diabetes diagnosis}
\label{SS:diab}
In this section we apply the Gibbs posterior distribution for a smooth cutoff function to a diabetes diagnostic data set studied by \citet{carvalho.2017} who used a nonparametric Bayesian regression model to estimate the cutoff.  Of $286$ patients, blood glucose measurements were used to classify $198$ as nondiabetic and $88$ as diabetic.  Our interest is in assessing how those diagnoses vary by age by estimating the age-adjusted cutoff function.  In \citet{carvalho.2017}, the authors find that a cubic b-spline with $d=4$ knots (no interior knots) worked best.  We compare their method to a Gibbs posterior with $8$ knots fixed at ``ages" $-20,\,-10,\,0,\,20,\,89,\,110,\,120,\,130$ since the minimum and maximum ages in the data were $20$ and $89$.  We used flat priors on $\beta_j$ for $j=1,...,4$, and a constant learning rate $\omega_n = 1$.  

The data, posterior mean functions, and $95\%$ credible bands are displayed in Figure~\ref{fig:gibbs_and_bayes_diabetes}.  The Gibbs posterior mean and Bayesian posterior mean of $\theta(z)$ are very similar and both models suggest (advanced) age increases the blood glucose reading at which diabetes is diagnosed.  The Gibbs posterior mean function is roughly quadratic while the Bayesian posterior mean function is closer to a piecewise linear function, flat from age 20 to 50 and then increasing with age.  The slight difference in shapes can be explained in part by the high-leverage observations at ages 27 and 28 where there are two diabetic patients with blood glucose levels of 139 and 330.  The objective function in \eqref{eq:empirical_Rn_func} is indifferent to $\theta(z)$ curves passing between these two points, which explains the high variation and upward slope of the Gibbs posterior for $\theta(z)$ in the age range $20-35$.  It is worth pointing out that the standard logistic regression model also suggests a quadratic relationship between age and diabetes diagnosis. Age is not a significant predictor in the logistic regression of diabetes diagnosis on age and blood glucose level, but becomes significant if age$^2$ is also included in the model.          
\begin{figure}
	\centering
		\includegraphics[width=5in]{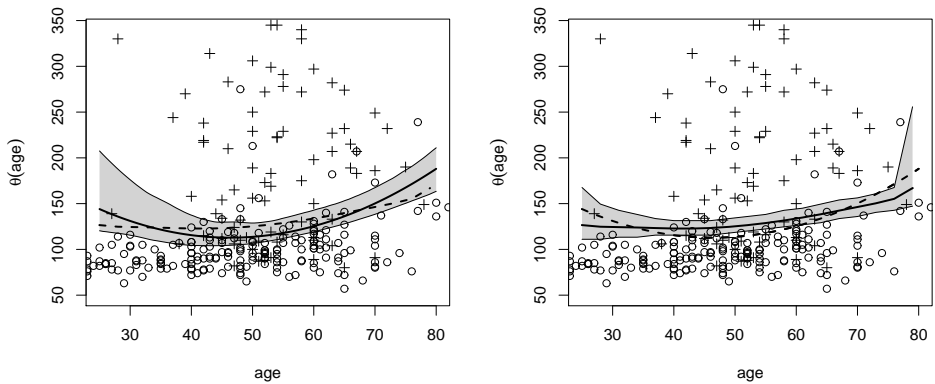}
			\caption{Blood glucose levels of patients diagnosed diabetic ($+$) and nondiabetic ($\circ$) along with posterior mean cutoff functions of age and $95\%$ credible bands for the Gibbs posterior (left) and Bayes posterior (right) in solid curves. Dashed curves show Bayes posterior mean (left) and Gibbs posterior mean (right) for ease of comparison.}
	\label{fig:gibbs_and_bayes_diabetes}
\end{figure}

\section{Simulations}
\label{ss:examples}
\subsection{Multi-class cutoff}
In addition to the data analyses in Section~5 we consider simulations for evaluating the coverage probability and length properties of Gibbs posterior credible intervals compared to Bayesian and bootstrap-based alternatives.  Examples 1 -- 3 detailed below correspond to scenarios 2 -- 4 in \citet{carvalho.2018}.  In all three examples there are $k=3$ equally-sampled diagnostic categories with $n$ observations per category, similar to a case-control study.  
\begin{enumerate}
	\item[1.] $F_j(x)$ is the cumulative distribution function of $\gam(2,1)$, $\gam(3,1)$, and $\gam(5,2)$ for $j=1,2,3$ where $\gam(\alpha,\beta)$ denotes the Gamma distribution with shape and scale parameters $\alpha$ and $\beta$.
	\item[2.] $F_1(x)$ is the cumulative distribution function of the normal mixture $\frac{1}{2}\nm(-1.5,0.5^2) + \frac{1}{2}\nm(0.5,1)$; $F_2(x)$ is the cumulative distribution function of the normal mixture $\frac{1}{2}\nm(1,1) + \frac{1}{2}\nm(4,1.5^2)$; and $F_3(x)$ is the cumulative distribution function $\nm(5,2^2)$, where $\nm(\mu, \sigma^2)$ denotes the normal distribution with mean $\mu$ and variance $\sigma^2$.
	\item[3.] $F_1(x)$ is the Student's t cumulative distribution function with $2$ degrees of freedom; $F_2(x)$ is the cumulative distribution function of a Beta distribution with shape and scale parameters equal to $2$; and $F_3(x)$ is the cumulative distribution function of a Chi-Squared distribution with one degree of freedom.
\end{enumerate}
For each simulation setting we sampled $500$ sets of data for sample sizes $n = 50$ and $n = 200$ observations in each diagnostic category and recorded the lengths and coverage probabilities of $95\%$ confidence/credible interval estimates for the cutoff.  We compared three methods of inference: percentile bootstrap, the non-parametric Bayes procedure from \citet{carvalho.2018}, and the Gibbs posterior distribution with learning rate $\omega_n$ determined by the GPC calibration procedure from \citet{syring.martin.scaling}.  We used both informative normal priors and vague normal priors for the Gibbs posterior distribution.  The informative priors are centered at $\theta^\star$ and use standard deviation $1/2$ in examples 1 and 2, and standard deviation $1/4$ in example 3.  

The results of the simulation study are summarized in Table~\ref{tbl:sim}.  The main takeaway from the simulations is that while all three inference methods produce interval estimates that come close to achieving their nominal coverage probabilities, the introduction of accurate prior information can lead to substantial improvements in precision, especially at smaller sample sizes.  Since only the Gibbs posterior method can incorporate an informative prior distribution, it should be preferred when such a prior distribution is available.   

\begin{table}
\resizebox{\textwidth}{!}{\begin{tabular}{lcccc}
\toprule
& \multicolumn{2}{c}{n = 50} & \multicolumn{2}{c}{n = 200}\\ 
& \multicolumn{1}{l}{Average length} & \multicolumn{1}{l}{Coverage proportion (\%)} & \multicolumn{1}{l}{Average length} & \multicolumn{1}{l}{Coverage proportion (\%)} \\ 
\bottomrule
\multicolumn{1}{l}{Bootstrap} & 1.57, 1.78                         & \multicolumn{1}{c}{92, 90}                  & 1.08, 1.19                         & 94, 92                                       \\
\multicolumn{1}{l}{Bayes}     & 1.89, 1.79                         & \multicolumn{1}{c}{95, 95}                  & 1.15, 1.02                         & 92, 92                                       \\
\multicolumn{1}{l}{Gibbs GPC*} & 1.16, 1.39                         & \multicolumn{1}{c}{94, 99}                  & 0.81, 0.95                         & 90, 94                                       \\
\multicolumn{1}{l}{Gibbs GPC} & 1.90, 2.18                         & \multicolumn{1}{c}{90, 95}                  & 1.01, 1.18                         & 88, 91                                       \\\bottomrule
& \multicolumn{1}{l}{}               & \multicolumn{1}{l}{}                         & \multicolumn{1}{l}{}               & \multicolumn{1}{l}{}                         \\
\toprule
& \multicolumn{2}{c}{n = 50}                & \multicolumn{2}{c}{n = 200}                                                       \\ 
& \multicolumn{1}{l}{Average length} & \multicolumn{1}{l}{Coverage proportion (\%)} & \multicolumn{1}{l}{Average length} & \multicolumn{1}{l}{Coverage proportion (\%)} \\ 
\bottomrule
\multicolumn{1}{l}{Bootstrap} & 1.32, 1.40                         & \multicolumn{1}{c}{91, 90}                  & 1.06, 1.19                                   &   94, 93\\
\multicolumn{1}{l}{Bayes}     & 1.70, 1.92                         & \multicolumn{1}{c}{94, 92}                  &  1.20, 1.22                                  &    95, 94\\
\multicolumn{1}{l}{Gibbs GPC*} & 1.27, 1.38                         & \multicolumn{1}{c}{93, 92}                  &   0.99, 1.14                                 &     90, 92  \\  
\multicolumn{1}{l}{Gibbs GPC} & 1.97, 2.45                         & \multicolumn{1}{c}{93, 94}                  &   1.28, 1.59                                 &     95, 90  \\ \bottomrule & \multicolumn{1}{l}{}               & \multicolumn{1}{l}{}                         & \multicolumn{1}{l}{}               & \multicolumn{1}{l}{}                         \\
\toprule
& \multicolumn{2}{c}{n = 50}                                                        & \multicolumn{2}{c}{n = 200}                                                       \\ 
& \multicolumn{1}{l}{Average length} & \multicolumn{1}{l}{Coverage proportion (\%)} & \multicolumn{1}{l}{Average length} & \multicolumn{1}{l}{Coverage proportion (\%)} \\ 
\bottomrule 
\multicolumn{1}{l}{Bootstrap} &  0.21, 0.23                       & \multicolumn{1}{c}{90, 90}                  &   0.13, 0.11                              &      93, 91\\
\multicolumn{1}{l}{Bayes}     &   0.18, 0.18                     & \multicolumn{1}{c}{96, 88}                  &              0.09, 0.09                      &     84, 59\\
\multicolumn{1}{l}{Gibbs GPC*} &     0.26, 0.29                   & \multicolumn{1}{c}{98, 97}                  &     0.13, 0.13                              &      92, 93    \\
\multicolumn{1}{l}{Gibbs GPC} &     0.31, 0.36                  & \multicolumn{1}{c}{97, 97}                  &     0.13, 0.14                              &      93, 94     \\\bottomrule
\end{tabular}}
\caption{Average lengths and coverage proportions of interval estimates for Youden index cutoffs $\theta_1$ and $\theta_2$ for simulations 1. (top), 2. (middle), and 3. (bottom) in Section~6.1.}
\label{tbl:sim}
\end{table}
\subsection{Covariate-adjusted cutoff}

We investigated the performance of the Gibbs posterior for the covariate-adjusted cutoff function in three simulation examples.  The first two examples were taken from \citet{carvalho.2017} and the third is closely related.  Example 1 is similar to a simple linear regression model; example 2 is a linear model with heteroscedasticity; and, example 3 is a heavy-tailed regression model. 

\begin{enumerate}
	\item[1.] The conditional CDFs of the diagnostic measure given covariate value are $F_1(x) = \Phi(0.5+z, 1.5)$ and $F_2(x) = \Phi(2+4z, 2)$ where $\Phi(\mu,\sigma)$ denotes the CDF of a normal random variable with mean $\mu$ and standard deviation $\sigma$.  

\item[2.] $F_1(x) = \Phi(3+1.5\sin(\pi z), 0.2+\exp(z))$ and $F_2(x) = \Phi(5+1.5z+1.5\sin(z),(1.5+\Phi(10z-2,1))^{1/2})$.

\item[3.] $F_1(x) = T(3+1.5\sin(\pi z), 2)$ and $F_2(x) = T(5+1.5z+1.5\sin(z), 2)$ where $T(m,\tau)$ denotes the CDF of a Student t random variable with location parameter $m$ and degrees of freedom $\tau$.  
\end{enumerate}

Observations were simulated by randomly sampling covariates $z_i\sim\unif(0,1)$ from a standard uniform distribution and, conditionally, simulating diagnostic measurement $x_i$ from the each of the above models.  For Example~1, we simulated $n=100$ samples for each diagnostic group, while for Examples~2 and 3 we simulated $n=200$ samples per group.  In each example we compare the Gibbs posterior mean function to the Bayesian posterior mean function using the method of \citet{carvalho.2017}.  Both methods recover the true cutoff function on average over $100$ simulated data sets.  The Gibbs posterior, which uses the default learning rate $\omega_n=1$, exhibits slightly more variation in posterior mean than the Bayesian posterior; see Figure~\ref{fig:reg}.

\begin{figure}
	\centering
		\includegraphics[width=5in]{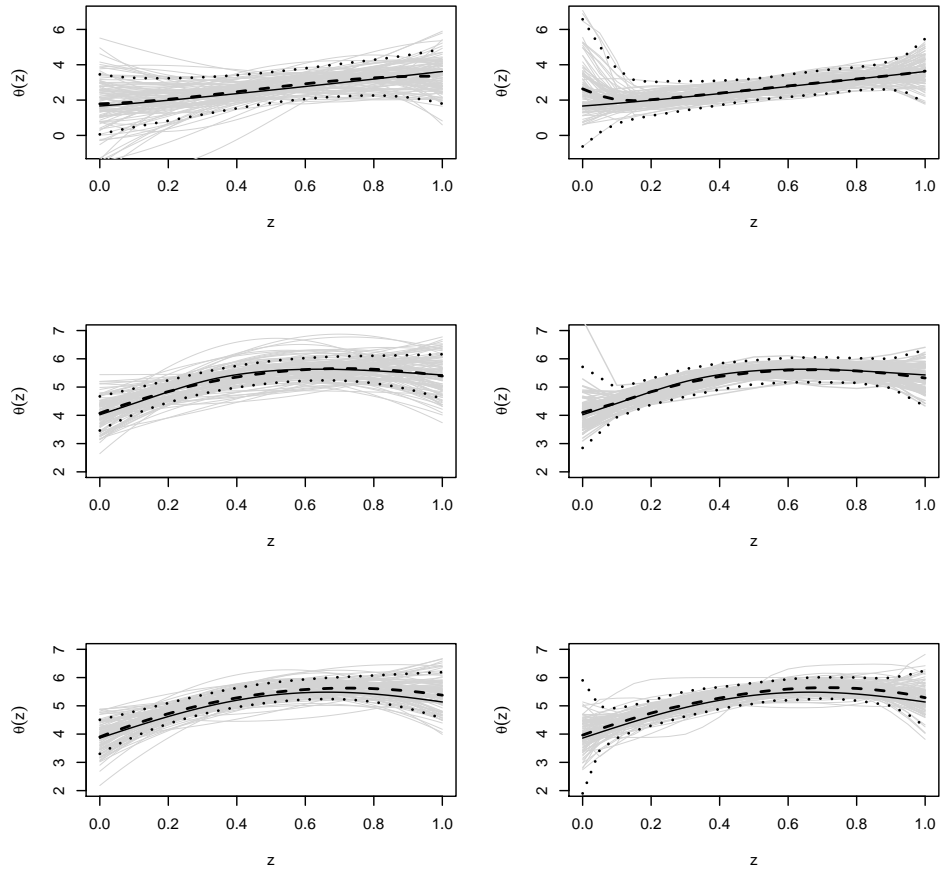}
	\caption{Simulations 1 (top), 2 (middle), and 3 (bottom) form Section~6.2 for the Gibbs (left column) and Bayes (right column) approaches.  Posterior means for $100$ simulations in gray, mean over all runs in dashed line, mean of $2.5^{th}$ and $97.5^{th}$ posterior quantiles in dotted lines and true cutoff function in solid black line.}
	\label{fig:reg}
\end{figure}

\section{Discussion}
Inference on Youden's index cutoff, the index itself, and related ROC curves has been an important problem in medical statistics in recent years; see \citet{nakas.2017}.  Our contribution addresses the challenging problems of robust modeling and incorporating prior information about Youden's index cutoff.  Current techniques address only one or the other concern, opting either for likelihood-based methods that can incorporate prior information via a Bayesian posterior distribution, or using a nonparametric approach that is robust to data distributions.  The proposed Gibbs posterior approach to inference on the cutoff offers a promising alternative to parametric models and M-estimation.  The Gibbs posterior is robust to data distributions and offers simple informative prior specification and integration.  It seamlessly incorporates covariate information and can be applied to data with any number of diagnostic categories.  The Gibbs posterior has favorable large-sample properties, and can be tuned to deliver valid credible sets for the cutoff in practice.

There are limitations to using the Gibbs posterior distribution for inference.  Our simulations and real-data examples suggest it may be inefficient compared to the nonparametric Bayesian method described in \citet{carvalho.2018} and the bootstrap when only vague prior information about the cutoff is available. In such cases there is no reason to prefer the Gibbs posterior to the bootstrap.   

An important problem we have not covered here is how to implement Gibbs posterior inference for the multi-class cutoff in the presence of covariates. The additional challenge brought by having multiple classes is the requirement the regression functions for each cutoff are ordered.  The same challenge arises in quantile regression when more than one quantile is modeled simultaneously. So, techniques from the literature on quantile regression could shed light on how to solve the multi-class problem with covariates.

\newpage

\appendix

\section{Proof of Theorem 4.1}

\subsection{Preliminary results}
\label{proof:gibbs_rate}

Towards a proof of Theorem~\ref{thm:1} write the Gibbs posterior probability of $A_n$ as
\[\Pi_n(A_n) = \frac{N_n(A_n)}{D_n} = \frac{\int_{A_n} e^{-\omega n [R_n(\theta) - R_n({\theta^\star})]}\Pi(d\theta)}{\int_\Theta e^{-\omega n [R_n(\theta) - R_n({\theta^\star})]}\Pi(d\theta)}.\]
The following preliminary results will be used to control the numerator $N_n(A_n)$ (Lemmas~\ref{lem:bracket_bound}-\ref{lem:num}) and the denominator $D_n$ (Lemma~\ref{lem:den}) in $P^n-$probability.  

Consider the family of functions defined by the loss differences $\mathcal{L}_\delta := \{\ell_{ \theta} - \ell_{{\theta^\star}}: \|\theta-{\theta^\star}\|<\delta\}$.  The following lemma bounds the complexity of this family by its bracketing number in $L_2(P)$.
\begin{lem}
\label{lem:bracket_bound}
$N_{[]}(\eps, \mathcal{L_\delta}, L_2(P)) \lesssim \delta / \eps^2$.
\end{lem}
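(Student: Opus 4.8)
The plan is to exploit the fact that, coordinate by coordinate, the loss difference is assembled from indicators of half-lines, which form a monotone family and are therefore easy to bracket. Expanding $\ell_\theta - \ell_{\theta^\star}$ from \eqref{eq:empirical_Rn}, each summand is
\[ \frac{1(x \le \theta_j, y = j+1) - 1(x \le \theta_j^\star, y = j+1)}{p_{j+1}} \; - \; \frac{1(x \le \theta_j, y = j) - 1(x \le \theta_j^\star, y = j)}{p_j}. \]
Hence $\mathcal{L}_\delta$ is contained in the (signed) sum, over the $k-1$ coordinates and the two relevant diagnostic labels, of classes of the form $\mathcal{H}_{j,m} = \{x \mapsto 1(x \le t)\,1(y = m) : |t - \theta_j^\star| < \delta\}$, each scaled by the fixed constant $1/p_m$; note $|\theta_j - \theta_j^\star| \le \|\theta - \theta^\star\| < \delta$, so the threshold $t$ indeed ranges only over a $\delta$-band around $\theta_j^\star$. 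Bracketing each such monotone class and then combining is the whole argument.

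For a single class $\mathcal{H}_{j,m}$, the key step is the standard quantile-partition construction for monotone functions. I would choose grid points $\theta_j^\star - \delta = t_0 < t_1 < \cdots < t_N = \theta_j^\star + \delta$ with $P(t_{i-1} < X \le t_i,\, Y = m) \le \eps^2$ for every $i$, and use the brackets $[\,1(x \le t_{i-1})1(y=m),\; 1(x \le t_i)1(y=m)\,]$. Each bracket contains every member of $\mathcal{H}_{j,m}$ whose threshold falls in $(t_{i-1}, t_i]$, and its $L_2(P)$ width equals $\{P(t_{i-1} < X \le t_i, Y=m)\}^{1/2} \le \eps$. The number of brackets required is therefore at most $N \le 1 + P(\theta_j^\star - \delta < X \le \theta_j^\star + \delta,\, Y = m)\,/\,\eps^2$.

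It remains to bound that band probability, and this is where regularity enters: in the regular case where the conditional CDF $F_m$ has a bounded density near $\theta_j^\star$ (equivalently, $\eta = 1$ in Assumption~\ref{assump:thm1}), one has $P(\theta_j^\star - \delta < X \le \theta_j^\star + \delta,\, Y = m) = p_m\{F_m(\theta_j^\star + \delta) - F_m(\theta_j^\star - \delta)\} \lesssim \delta$, giving $N \lesssim \delta/\eps^2$ for each monotone piece. Since the full loss difference is a sum of a \emph{fixed} number, $2(k-1)$, of such pieces, I would combine the coordinatewise brackets in the usual way — forming products of brackets and summing $L_2(P)$ widths, at the cost of replacing $\eps$ by $\eps/\{2(k-1)\}$ — which preserves the order and yields $N_{[]}(\eps, \mathcal{L}_\delta, L_2(P)) \lesssim \delta/\eps^2$ with a constant depending only on $k$ and the density bound.

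The main obstacle is the mass-to-width bookkeeping rather than any deep estimate. One must confirm that the $\delta$-band around $\theta_j^\star$ carries only $O(\delta)$ probability, so that the dependence on $\delta$ comes out linear rather than $\delta^\eta$; this is precisely the Lipschitz regularity of $F_m$ at the optimal cutoff, and it is the place to state the operative smoothness hypothesis carefully. Secondarily, one must check that merging the finitely many monotone classes does not inflate the $\delta/\eps^2$ order, which holds because $k$ is a fixed constant. Everything else is the routine monotone-bracketing count.
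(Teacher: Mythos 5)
Your proof is essentially the paper's own argument: both partition the $\delta$-band around each $\theta_j^\star$ into cells of $P$-mass of order $\eps^2 \min_j p_j^2$, bracket the half-line indicator differences by interval indicators of $L_2(P)$-width $\eps$, and bound the number of cells by $\delta/\eps^2$ using the Lipschitz ($\eta = 1$) behavior of the conditional CDFs at $\theta_j^\star$ --- a hypothesis you rightly make explicit where the paper leaves it tacit. One caveat shared with the paper: combining the $2(k-1)$ coordinate/label classes genuinely multiplies the component bracket counts (yielding order $(\delta/\eps^2)^{2(k-1)}$ rather than $\delta/\eps^2$, so your claim that rescaling $\eps$ to $\eps/\{2(k-1)\}$ ``preserves the order'' is not literally correct, and neither is the paper's additive count $2k\delta/(\eps^2\min_j p_j^2)$), but since only the entropy $\log N_{[]}$ enters the entropy integral and Lemma~\ref{lem:maximal}, the $\delta^{1/2}$ bound and everything downstream are unaffected.
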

\begin{proof}
Define a grid $-\infty = t_0 < t_2 < \cdots < t_J = \infty$ such that there exists a subsequence of integers $s_1,\ldots, s_k$ where $t_{s_j} = \theta^\star_j$. Define $s_0 = 0$.  And, let the grid satisfy $P(X\in (t_j, t_{j+1})) < \eps^2\min_j p_j^2$ for a fixed $\eps>0$.  For $\ell = 1, \ldots, k-1$, define the functional brackets $[\tfrac{1}{\min_j p_j}1\{x \in [t_j, \theta_\ell^\star]\}, \, \tfrac{1}{\min_j p_j}1\{x \in (t_{j-1}, \theta_\ell^\star]\}]$ for $s_{\ell-1}<j\leq s_\ell$.  These are brackets of $\mathcal{L}_\delta$ with $L_2(P)-$size $\eps$ and their total number can be taken less than $2k\delta/(\eps^2\min_j p_j^2)$.  
\end{proof}
Lemma~\ref{lem:bracket_bound} implies the following bound on the entropy with bracketing of $\mathcal{L}_\delta$,
\begin{align}
\label{eq:entropy}
    J_{[]}(\delta^{1/2}, \mathcal{L}_\delta, L_2(P)) &= \int_0^{\delta^{1/2}} \sqrt{\log N_{[]}(\eps, \mathcal{L_\delta}, L_2(P))}d\eps \lesssim \delta^{1/2}.
\end{align}
Denote the empirical process $\mathbb{G}_n(\ell_{\theta} - \ell_{{\theta^\star}}):=\sqrt{n}[\mathbb{P}_n(\ell_{\theta} - \ell_{{\theta^\star}}) - E(\ell_{\theta} - \ell_{{\theta^\star}})]$ where $\mathbb{P}_n(\cdot)$ denotes expectation with respect to empirical measure.  The function $F(\theta) = \frac{1}{\min jp_j}1\{\|\theta - {\theta^\star}\|<\delta\}$ is an envelope for $\mathcal{L}_\delta$ with $L_2(P)-$size $\|F\|_{L_2(P)}\lesssim \delta^{1/2}$.  Then, \eqref{eq:entropy} along with Corollary 19.35 in \citet{vaart} implies the following maximal inequality:
\begin{lem}
\label{lem:maximal}
$E\{\sup_{\|\theta-{\theta^\star}\|<\delta} |\mathbb{G}_n(\ell_{\theta} - \ell_{{\theta^\star}})|\}\lesssim \delta^{1/2}.$
\end{lem}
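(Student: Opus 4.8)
The plan is to derive the bound as a direct application of the bracketing maximal inequality, Corollary 19.35 in \citet{vaart}, to the class $\mathcal{L}_\delta$. That result states that for a class $\mathcal{F}$ with measurable envelope $F$ one has $E\{\sup_{f\in\mathcal{F}}|\mathbb{G}_n(f)|\} \lesssim J_{[]}(\|F\|_{L_2(P)},\mathcal{F},L_2(P))$, where $J_{[]}(z,\mathcal{F},L_2(P)) = \int_0^z \sqrt{1+\log N_{[]}(\eps,\mathcal{F},L_2(P))}\,d\eps$. Every ingredient required to invoke this inequality for $\mathcal{F}=\mathcal{L}_\delta$ has already been assembled: Lemma~\ref{lem:bracket_bound} bounds the bracketing numbers, the function $F$ serves as an envelope with $\|F\|_{L_2(P)}\lesssim\delta^{1/2}$, and \eqref{eq:entropy} records the value of the entropy integral.

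The steps are then as follows. First I would confirm that $F$ dominates every member of $\mathcal{L}_\delta$ on the sample space; this holds because each loss difference $\ell_\theta-\ell_{\theta^\star}$ is a signed combination, scaled by factors $1/p_j\leq 1/\min_j p_j$, of indicators of the symmetric differences $\{x\leq\theta_j\}\triangle\{x\leq\theta_j^\star\}$, each supported in a neighborhood of $\theta_j^\star$ whose $P$-mass is controlled by $\delta$. Second, since $z\mapsto J_{[]}(z,\mathcal{L}_\delta,L_2(P))$ is nondecreasing and $\|F\|_{L_2(P)}\lesssim\delta^{1/2}$, the upper limit of integration appearing in the corollary is bounded by a constant multiple of $\delta^{1/2}$, so it suffices to control $J_{[]}(\delta^{1/2},\mathcal{L}_\delta,L_2(P))$. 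Third, I would substitute the bound $N_{[]}(\eps,\mathcal{L}_\delta,L_2(P))\lesssim\delta/\eps^2$ from Lemma~\ref{lem:bracket_bound} and evaluate the resulting integral.

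That integral is the only computation of substance, and it underlies \eqref{eq:entropy}: the change of variables $\eps=\delta^{1/2}u$ gives
\[ \int_0^{\delta^{1/2}} \sqrt{\log(\delta/\eps^2)}\,d\eps = \delta^{1/2}\int_0^1 \sqrt{2\log(1/u)}\,du, \]
and the remaining integral $\int_0^1\sqrt{2\log(1/u)}\,du$ is a finite absolute constant, whence $J_{[]}(\delta^{1/2},\mathcal{L}_\delta,L_2(P))\lesssim\delta^{1/2}$ and the corollary delivers the claim. I do not anticipate a genuine obstacle. The only point deserving care is the bookkeeping that matches the envelope norm $\|F\|_{L_2(P)}$ to the upper limit $\delta^{1/2}$ of integration: because the bracketing number and the squared envelope norm scale with the same power of $\delta$, the logarithmic factor contributes only a constant after the change of variables, and this is exactly what makes the final bound proportional to $\delta^{1/2}$ rather than to a larger power of $\delta$. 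A minor measurability technicality---replacing outer expectation by expectation---is absorbed by the countable, nested structure of the brackets constructed in the proof of Lemma~\ref{lem:bracket_bound}.
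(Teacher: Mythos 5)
Your proposal is correct and takes essentially the same route as the paper: the paper likewise applies Corollary 19.35 of \citet{vaart} to $\mathcal{L}_\delta$ with the envelope $F$ of $L_2(P)$-norm $\lesssim\delta^{1/2}$ and the bracketing bound of Lemma~\ref{lem:bracket_bound}, the entropy integral in \eqref{eq:entropy} being exactly the quantity you evaluate. The only difference is that you write out the change of variables $\eps=\delta^{1/2}u$ showing the integral equals a constant times $\delta^{1/2}$, a computation the paper states without detail.
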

The maximal inequality provided by Lemma~\ref{lem:maximal} along with the bound on $R(\theta)-R(\theta^\star)$ provided by Assumption~\ref{assump:thm1} ii. can be used to prove the following uniform probability bound. 
\begin{lem}
\label{lem:num}
There exists $K > 0$ such that 
\[ P\Bigl(\sup_{\|\theta-{\theta^\star}\| > M_n\eps_n} \{R_n({\theta^\star}) - R_n(\theta) \} > -K (M_n\eps_n)^{\gamma} \Bigr) \to 0, \quad \text{as $n \to \infty$}. \]
\end{lem}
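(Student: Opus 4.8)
The plan is to center the empirical objective around its population counterpart and show the population drift dominates the stochastic fluctuation uniformly away from $\theta^\star$. Writing $R_n(\theta) - R_n(\theta^\star) = [R(\theta) - R(\theta^\star)] + n^{-1/2}\mathbb{G}_n(\ell_\theta - \ell_{\theta^\star})$, the event in the lemma is contained in the event that, for some $\theta$ with $\|\theta - \theta^\star\| > M_n\eps_n$, the centered fluctuation $n^{-1/2}\mathbb{G}_n(\ell_\theta - \ell_{\theta^\star})$ is negative enough to pull $R_n(\theta) - R_n(\theta^\star)$ below $K(M_n\eps_n)^\gamma$. By \eqref{eq:ident1} the drift satisfies $R(\theta) - R(\theta^\star) \gtrsim \|\theta - \theta^\star\|^\gamma$ on this region, so it grows as one moves away from $\theta^\star$, whereas the fluctuation is controlled by the maximal inequality of Lemma~\ref{lem:maximal}.

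First I would run a peeling argument, partitioning $\{\|\theta - \theta^\star\| > M_n\eps_n\}$ into shells $S_j := \{2^j M_n\eps_n < \|\theta - \theta^\star\| \leq 2^{j+1}M_n\eps_n\}$ for $j = 0, 1, 2, \ldots$. On $S_j$, \eqref{eq:ident1} gives a drift of at least a constant multiple $c$ of $(2^j M_n\eps_n)^\gamma$, so taking $K = c/2$ forces, on the bad event restricted to $S_j$, the bound $\sup_{\theta \in S_j}|\mathbb{G}_n(\ell_\theta - \ell_{\theta^\star})| \geq \tfrac{c}{2}\sqrt{n}\,2^{j\gamma}(M_n\eps_n)^\gamma$. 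Markov's inequality combined with Lemma~\ref{lem:maximal} applied at radius $\delta = 2^{j+1}M_n\eps_n$ then bounds the probability of this shell-wise event by a constant multiple of
\[
\frac{(2^{j+1}M_n\eps_n)^{1/2}}{\sqrt{n}\,2^{j\gamma}(M_n\eps_n)^\gamma} \lesssim \frac{2^{j(1/2-\gamma)}}{\sqrt{n}\,(M_n\eps_n)^{\gamma-1/2}}.
\]

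Summing over the shells via a union bound, the assumption $\gamma > 1/2$ makes the exponent $1/2-\gamma$ negative, so $\sum_{j\geq 0}2^{j(1/2-\gamma)}$ converges to a finite constant and the total probability is $\lesssim n^{-1/2}(M_n\eps_n)^{1/2-\gamma}$, which vanishes under the hypothesis $n^{-1/2}(M_n\eps_n)^{1/2-\gamma}\to 0$ of Theorem~\ref{thm:1}. I expect the main obstacle to be the innermost shell $j=0$: there the drift is smallest relative to the envelope size, so it is the binding term and is exactly what produces the rate condition $n^{1/2}(M_n\eps_n)^{\gamma-1/2}\to\infty$; the outer shells are comparatively harmless because the drift grows geometrically like $2^{j\gamma}$ while the maximal-inequality envelope grows only like $2^{j/2}$. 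A secondary point requiring care is choosing $K$ uniformly small (e.g.\ $K = c/2$) so that the drift dominates $K(M_n\eps_n)^\gamma$ simultaneously on every shell.
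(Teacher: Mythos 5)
Your proposal is correct and follows essentially the same route as the paper's proof: the same decomposition $R_n(\theta^\star)-R_n(\theta)=\{R(\theta^\star)-R(\theta)\}-n^{-1/2}\GG_n(\ell_\theta-\ell_{\theta^\star})$, the same peeling into dyadic shells with the choice $K\leq C/2$ so the drift from \eqref{eq:ident1} dominates uniformly, and the same Markov-plus-maximal-inequality bound from Lemma~\ref{lem:maximal} summed geometrically using $\gamma>1/2$, yielding the vanishing factor $n^{-1/2}(M_n\eps_n)^{1/2-\gamma}$. Your observation that the innermost shell is the binding term producing the rate condition is accurate and consistent with the paper's argument.
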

\begin{proof}
Start with the identity 
\[ R_n({\theta^\star}) - R_n(\theta) = \{ R({\theta^\star}) - R(\theta) \} - n^{-1/2} \GG_n(\ell_{\theta} - \ell_{{\theta^\star}}).  \]
Next, since the supremum of a sum is no more than the sum of the suprema, we get 
\[ \sup_{\|\theta-{\theta^\star}\| > \eps} \{ R_n({\theta^\star}) - R_n(\theta) \} \leq \sup_{\|\theta-{\theta^\star}\| > \eps} \{ R({\theta^\star}) - R(\theta) \} + n^{-1/2} \sup_{\|\theta-{\theta^\star}\| > \eps} |\GG_n(\ell_{\theta} - \ell_{{\theta^\star}})|; \]
the second inequality comes from putting absolute value on the empirical process term.  From Assumption~\ref{assump:thm1} ii., we get 
\[ \sup_{\|{\theta}-{\theta^\star}\| > \eps} \{ R_n({\theta^\star}) - R_n(\theta) \} \leq -C \eps^{\gamma} + n^{-1/2} \sup_{\|{\theta}-{\theta^\star}\| > \eps} |\GG_n(\ell_{\theta} - \ell_{{\theta^\star}})|. \]
Now, following the proof of Theorem~5.52 from \citet{vaart} or of Theorem~1 in \citet{wong.shen.1995}, introduce ``shells'' $\{\theta: 2^m \eps < \|\theta-{\theta^\star}\| \leq 2^{m+1} \eps\}$ for integers $m$.  On these shells, we can use both the bound in Assumption~\ref{assump:thm1} ii. and the maximal inequality in Lemma~\ref{lem:maximal}.  That is, 
\begin{align*}
\sup_{\|\theta-{\theta^\star}\| > M_n\eps_n} & \{R_n({\theta^\star}) - R_n(\theta)\} > -K (M_n\eps_n)^{\gamma} \\
& \implies \sup_{2^m M_n\eps_n < \|\theta-{\theta^\star}\| \leq 2^{m+1} M_n\eps_n} \{R_n({\theta^\star}) - R_n(\theta)\} > -K (M_n\eps_n)^{\gamma} \quad \exists \; m \geq 0 \\
& \implies n^{-1/2} \sup_{2^m M_n\eps_n < \|\theta-{\theta^\star}\| <  2^{m+1} M_n\eps_n} |\GG_n(\ell_{\theta}-\ell_{{\theta^\star}})| \geq C (2^m M_n\eps_n)^{\gamma} - K (M_n\eps_n)^{\gamma} \\
& \implies n^{-1/2} \sup_{\|\theta-{\theta^\star}\| \leq  2^{m+1} M_n\eps_n} |\GG_n(\ell_{\theta}-\ell_{{\theta^\star}})| \geq C (2^m M_n\eps_n)^{\gamma} - K (M_n\eps_n)^{\gamma},
\end{align*}
If $K \leq C/2$, then $C (2^m M_n\eps_n)^{\gamma} - K (M_n\eps_n)^{\gamma} \geq C(2^m M_n\eps_n)^{\gamma} / 2$ for all $m \geq 0$.   
\begin{align*}
P\Bigl( \sup_{\|\theta-{\theta^\star}\| > M_n\eps_n} & \{R_n({\theta^\star}) - R_n(\theta)\} > -K (M_n\eps_n)^{\gamma} \Bigr) \\
& \leq \sum_{m \geq 0} P\Bigl(  n^{-1/2} \sup_{\|\theta-{\theta^\star}\| <  2^{m+1} M_n\eps_n} |\GG_n(\ell_{\theta}-\ell_{{\theta^\star}})| \geq C (2^m M_n\eps_n)^{\gamma} / 2 \Bigr)
\end{align*}
To the summands, apply Markov's inequality and Lemma~\ref{lem:maximal} to get 
\[ P\Bigl(  n^{-1/2} \sup_{\|\theta-{\theta^\star}\| <  2^{m+1} M_n\eps_n} |\GG_n(\ell_{\theta}-\ell_{{\theta^\star}})| \geq C (2^m M_n\eps_n)^{\gamma} / 2 \Bigr) \leq \frac{C' (2^{m+1} M_n\eps_n)^{1 / 2}}{n^{1/2} (2^m M_n\eps_n)^{\gamma}}. \]
Collecting the $n-$ and $m-$dependent terms on the right hand side of the above expression and simplifying we get the multiplicative factors $n^{-1/2}(M_n\eps_n)^{1/2-\gamma}$ and $\sum_{m=1}^\infty 2^{m(1/2-\gamma)}$.  Since $\gamma>1/2$ the sum converges while the first factor vanishes by assumption, and, consequently, the upper bound vanishes as $n\rightarrow\infty$.  
\end{proof}
\begin{lem}
\label{lem:den}
For any $Q>0$ and any sequence $s_n$ satisfying $s_n\rightarrow 0$ and $ns_n^\eta\rightarrow\infty$ \[D_n \gtrsim (s_n)^{k-1}e^{-Qn\omega s_n^\eta}\]
with $P^n-$probability converging to $1$.
\end{lem}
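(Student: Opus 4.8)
The plan is to bound $D_n$ from below by discarding all of $\Theta$ except a shrinking ball about ${\theta^\star}$, on which both the prior mass and the integrand are easy to control. Fix $Q>0$ and set $\rho_n = c\,s_n$ for a constant $c>0$ to be chosen; restricting the integral to $L(\rho_n)=\{\theta:\|\theta-{\theta^\star}\|<\rho_n\}$ gives $D_n \geq \int_{L(\rho_n)} e^{-\omega n[R_n(\theta)-R_n({\theta^\star})]}\,\Pi(d\theta)$. It then suffices to prove two facts: that $R_n(\theta)-R_n({\theta^\star}) \leq Q s_n^\eta$ uniformly over $\theta\in L(\rho_n)$ on an event of $P^n$-probability tending to one, and that $\Pi(L(\rho_n)) \gtrsim s_n^{k-1}$. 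Granting both, on that event the integrand is at least $e^{-\omega n Q s_n^\eta}$ throughout $L(\rho_n)$, and $D_n \geq e^{-\omega n Q s_n^\eta}\,\Pi(L(\rho_n)) \gtrsim s_n^{k-1}e^{-\omega n Q s_n^\eta}$.

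For the uniform bound on the integrand I would reuse the centering identity from Lemma~\ref{lem:num}, $R_n(\theta)-R_n({\theta^\star}) = \{R(\theta)-R({\theta^\star})\} + n^{-1/2}\GG_n(\ell_\theta-\ell_{{\theta^\star}})$, and treat the two pieces separately on the ball. Since $R(\theta)-R({\theta^\star}) = \sum_{j=1}^{k-1}\{F_{j+1}(\theta_j)-F_{j+1}(\theta^\star_j)\}-\{F_j(\theta_j)-F_j(\theta^\star_j)\}$, the triangle inequality and Assumption~\ref{assump:thm1} ii. give $|R(\theta)-R({\theta^\star})|\lesssim \rho_n^\eta$ on $L(\rho_n)$; taking $c$ small (depending only on $Q$ and the constant implicit in Assumption~\ref{assump:thm1} ii.) forces this below $\tfrac{Q}{2}s_n^\eta$. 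For the stochastic piece, the maximal inequality of Lemma~\ref{lem:maximal} with $\delta=\rho_n$ yields $E[\sup_{\theta\in L(\rho_n)}|\GG_n(\ell_\theta-\ell_{{\theta^\star}})|]\lesssim \rho_n^{1/2}$, so a single Markov inequality shows $n^{-1/2}\sup_{\theta\in L(\rho_n)}|\GG_n(\ell_\theta-\ell_{{\theta^\star}})|\leq \tfrac{Q}{2}s_n^\eta$ with probability tending to one. Summing the two bounds delivers $\sup_{\theta\in L(\rho_n)}\{R_n(\theta)-R_n({\theta^\star})\}\leq Q s_n^\eta$ on the desired event.

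The prior-mass bound is straightforward. Because ${\theta^\star}$ has strictly ordered coordinates it lies in the interior of the support of $\Pi$, so for all large $n$ the ball $L(\rho_n)$ sits inside the ordered region, where by Assumption~\ref{assump:thm1} i. the prior density is bounded below by a positive constant; hence $\Pi(L(\rho_n))\gtrsim \mathrm{vol}(L(\rho_n))\gtrsim \rho_n^{k-1}\gtrsim s_n^{k-1}$, the factor $c^{k-1}$ being absorbed into the constant. Combined with the integrand bound, this completes the argument.

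The main obstacle is the Markov step, where I must confirm that $n^{-1/2}\rho_n^{1/2}$ is truly negligible relative to $s_n^\eta$; since $\rho_n\gtrsim s_n$ is forced by the prior-mass requirement, the relevant term is of order $n^{-1/2}s_n^{1/2}$, and one needs $n^{-1/2}s_n^{1/2-\eta}=(ns_n^{2\eta-1})^{-1/2}\to0$. This is exactly where the hypotheses enter: as $s_n\to0$, the bound $ns_n^{2\eta-1}\geq ns_n^\eta\to\infty$ holds whenever $\eta\le1$, which covers the regime of interest (the regular case is $\eta=1$), so the bad-event probability $\lesssim (ns_n^\eta)^{-1/2}$ vanishes. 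Everything else is routine bookkeeping of constants, the only care being to fix $c$ before letting $n\to\infty$.
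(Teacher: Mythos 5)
Your proposal is correct in its mechanics but takes a genuinely different route from the paper. You localize $D_n$ to a ball $L(\rho_n)$ with $\rho_n = c s_n$ and control the integrand \emph{uniformly} there, splitting $R_n(\theta)-R_n({\theta^\star}) = \{R(\theta)-R({\theta^\star})\} + n^{-1/2}\GG_n(\ell_\theta-\ell_{{\theta^\star}})$ and handling the fluctuation with the bracketing maximal inequality of Lemma~\ref{lem:maximal} plus Markov; the paper never bounds a supremum at all. Instead it defines $G_n$ through the \emph{pair} of conditions $R(\theta)-R({\theta^\star})\le s_n^\eta$ and $V(\ell_\theta-\ell_{{\theta^\star}})\le s_n^\eta$ and invokes Lemma~1 of \citet{syring.martin.gibbs}, a Chebyshev-type bound applied to the prior-integrated empirical risk; the variance clause in $G_n$ is exactly what drives that lemma, and its bad-event probability is of order $(n s_n^\eta)^{-1}$, which vanishes under the stated hypothesis for \emph{every} $\eta>0$. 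The remaining steps in the paper---bounding the mean and variance of the loss difference by the CDF increments, invoking Assumption~\ref{assump:thm1}~ii.\ to get $\{R(\theta)-R({\theta^\star})\} \vee V(\ell_\theta-\ell_{{\theta^\star}})\lesssim\eps^\eta$ on $\|\theta-{\theta^\star}\|<\eps$, and converting prior mass of an $L_1$ ball into $s_n^{k-1}$---match your deterministic and prior-mass steps almost verbatim. The trade-off is the one you flag yourself: your Markov step needs $n s_n^{2\eta-1}\to\infty$, which the hypothesis $n s_n^\eta\to\infty$ supplies only when $\eta\le 1$, whereas Assumption~\ref{assump:thm1} permits any $\eta\in(0,\gamma]$ with $\gamma$ possibly exceeding $1$, so strictly speaking your argument proves the lemma as stated only in a subcase. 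The restriction is benign: since $\eps^\eta\le\eps^{\min(\eta,1)}$ for small $\eps$, Assumption~\ref{assump:thm1}~ii.\ with $\eta>1$ implies the same with exponent $\min(\eta,1)$, and running the lemma at that exponent still feeds Theorem~\ref{thm:1} with only constant and logarithmic changes; alternatively, for $\eta>1$ you could take $\rho_n=s_n^\eta$, which repairs your stochastic term under exactly $ns_n^\eta\to\infty$ at the cost of the weaker polynomial factor $s_n^{\eta(k-1)}$, again sufficient downstream. What your route buys is self-containedness---it reuses only machinery already proved in this appendix rather than outsourcing the key probabilistic step to an external lemma; what the paper's route buys is full generality in $\eta$ with no maximal inequality needed on the shrinking ball.
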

\begin{proof}
Define the sets $G_n:=\{\theta: R(\theta)-R(\theta^\star) \vee V(\ell_\theta - \ell_{\theta^\star}) \leq s_n^\eta\}$.  Lemma~1 in \citet{syring.martin.gibbs} shows 
\[ D_n \geq \Pi(G_n)e^{-2nQ\omega s_n^\eta}\]
with $P^n-$probability converging to $1$ for an arbitrary constant $Q>0$.

The next step is to quantify $\Pi(G_n)$ in terms of the $L_1$ metric on $\Theta$.  For one observation, consider the loss difference
\begin{align*}
&\ell(\theta;Y,X)-\ell({\theta^\star};Y,X)= \sum_{j=1} \ell(\theta_j,Y,X) - \ell(\theta_j^\star,Y,X).
\end{align*}
This difference has expectation 
\begin{align*}
E(&\ell(\theta;Y,X)-\ell({\theta^\star};Y,X))= \sum_{j=1}^{k-1}\ F_{j+1}(\theta_j)-F_{j}(\theta_j)-F_{j+1}(\theta_j^\star)+F_{j}(\theta_j^\star),
\end{align*}
and variance (bounded by the second moment), and using the facts $\theta_j<\theta_{j+1}$ and $\theta^\star_j < \theta^\star_{j+1}$,
\begin{align*}
&V(\ell(\theta;Y,X)-\ell({\theta^\star};Y,X))< \sum_{j=1}^{k-1} \frac{|F_{j+1}(\theta_j)-F_{j+1}(\theta^\star_j)|}{p_{j+1}}+\frac{|F_j(\theta_j)-F_j(\theta_j^\star)|}{p_j}.
\end{align*}

Next, by Assumption~\ref{assump:thm1} ii. and the triangle inequality, we can bound the above expectation and variance by the $L_1-$norm.  For any $\theta$ such that $\|\theta-\theta^\star\|<\eps$
\[\{R(\theta)-R(\theta^\star) \vee V(\ell_{\theta} - \ell_{{\theta^\star}})\} \lesssim \eps^\eta. \]
 For some appropriately chosen constant $C>0$ the sets $G_n$ contain the  $L_1-$neighborhoods
\[G_n \supset \{\theta: \|\theta-{\theta^\star}\|\leq Cs_n\}.\]

Finally, by the above arguments and Assumption~\ref{assump:thm1} i., we see that $\Pi(G_n)$ can be bounded
\begin{align*}
\Pi(G_n) &\geq \Pi(\{\theta: \|\theta-{\theta^\star}\|\leq Cs_n\})\gtrsim (s_n)^{k-1}.
\end{align*}
\end{proof}

\setcounter{section}{1}
\subsection{Proof of Theorem 4.1 a)}
\begin{proof}

First, consider the numerator $N_n(A_n)$ of the Gibbs posterior probability $\Pi_n(A_n)$
\begin{align*}
N_n(A_n) & = \int_{\|\theta-{\theta^\star}\| > M_n\eps_n} e^{-\omega n\{R_n(\theta) - R_n({\theta^\star})\}} \, \Pi(d\theta). \end{align*}
By Lemma~\ref{lem:num}, 
\[N_n(A_n)\lesssim e^{-nK\omega (M_n\eps_n)^\gamma}\]
with $P^n-$probability converging to $1$.  

Now, consider the denominator $D_n$ of the Gibbs posterior probability $\Pi_n(A_n)$. Apply Lemma~\ref{lem:den} with the choice $s_n = (M_n\eps_n)^{\gamma/\eta}$, noting these choices satisfy $s_n\rightarrow 0$ and $ns_n^{\eta}\rightarrow \infty$, and obtain the following in-probability lower bound 
\[ D_n \gtrsim (M_n\eps_n)^{(k-1)\gamma/\eta} e^{-Qn\omega(M_n\eps_n)^\gamma}. \]

With these bounds on $N_n(A_n)$ and $D_n$ in mind, and the fact that $\Pi_n(A_n)\leq 1$, bound $\Pi_n(A_n)$ as follows:
\begin{align*}
    \Pi_n(A_n) &\leq \frac{N_n(A_n)}{D_n}1\{D_n \geq (M_n\eps_n)^{(k-1)\gamma/\eta} e^{-Qn\omega(M_n\eps_n)^\gamma}\}1\{N_n(A_n)\leq e^{-nK\omega (M_n\eps_n)^\gamma}\}\\
    & + 1\{D_n < (M_n\eps_n)^{(k-1)\gamma/\eta} e^{-Qn\omega(M_n\eps_n)^\gamma}\} + 1\{N_n(A_n)> e^{-nK\omega (M_n\eps_n)^\gamma}\}\\
    &\lesssim e^{-\omega n(M_n\eps_n)^\gamma( K-Q)-(k-1)(\gamma/\eta)\log M_n\eps_n}  + 1\{D_n < (M_n\eps_n)^{(k-1)\gamma/\eta} e^{-Qn\omega(M_n\eps_n)^\gamma}\}\\& + 1\{N_n(A_n)> e^{-nK\omega(M_n\eps_n)^\gamma}\}.
\end{align*}
Take expectation of both sides to see that
\[E[\Pi_n(A_n)] \lesssim e^{-\omega n(M_n\eps_n)^\gamma( K-Q)-(k-1)(\gamma/\eta)\log M_n\eps_n} + o(1).\]
Since $Q$ is arbitrary $K-Q>0$ and $n(M_n\eps_n)^\gamma\{\log M_n\eps_n\}^{-1}\rightarrow \infty$ by assumption, the upper bound in the above display vanishes as $n\rightarrow \infty$. It follows by Markov's inequality that $P^n[\Pi_n(A_n)>\eps]\rightarrow 0$ as $n\rightarrow \infty$ for any $\eps>0$, completing the proof.  

\end{proof}

\subsection{Proof of Theorem 4.1 b)}
\label{ss:part.b}
\begin{proof}
The sample proportions $\hat p_j$ for $j=1, \ldots, k-1$ converge, individually, to $p_j$ by the LLN.  And, since $k$ is finite the vector $\hat p = (\hat p_1, \ldots, \hat p_{k-1})$ converges to the vector $p = (p_1, \ldots, p_{k-1})$ uniformly.  Further, Chebyshev's inequality implies $\hat p$ converges at rate $m_nn^{-1/2}$ where $m_n$ is any diverging sequence.  Therefore, denoting $(Y^n, X^n):=((Y_1, X_1),\ldots, (Y_n, X_n))$ the set $W:=\{(Y^n, X^n): \|\hat p - p\|_\infty > m_nn^{-1/2}\}$ has vanishing probability.  Since the posterior probability $\Pi_n(A_n)\leq 1$ it follows that
\begin{align*}
    E[\Pi_n(A_n)] &= E[\Pi_n(A_n)1(W)] + E[\Pi_n(A_n)1(W^c)]\\
    &= o(1) + E[\Pi_n(A_n)1(W^c)] 
\end{align*}
so we focus on bounding $E[\Pi_n(A_n)1(W^c)]$.

Let $R_n(\theta, {\hat p})$ denote the version of the empirical risk function with $p_j$ replaced by $\hat p_j$ for $j=1, \ldots, n$.  That is,
\[R_n(\theta, {\hat p}) = \frac{1}{n}\sum_{i=1}^n \sum_{j=1}^{k-1}\frac{1(x_i\leq \theta_j, y_i=j+1)}{ \hat p_{j+1}}-\frac{1(x_i\leq \theta_j, y_i=j)}{ \hat p_j} .\]
On $W^c$, $R_n(\theta, \hat p)$ is bounded above by
\[\frac{1}{n}\sum_{i=1}^n \sum_{j=1}^{k-1}\frac{1(x_i\leq \theta_j, y_i=j+1)}{  p_{j+1} - m_nn^{-1/2}}-\frac{1(x_i\leq \theta_j, y_i=j)}{  p_j+m_nn^{-1/2}},\]
so that 
\[|R_n(\theta, {\hat p}) - R_n(\theta)| \lesssim m_nn^{-1/2}.\]
Then, on $W^c$ the difference $R_n(\theta,\hat p) - R_n(\theta^\star, \hat p)$ can be bounded by
\begin{align*}
    R_n(\theta, {\hat p}) - R_n({\theta^\star}, {\hat p})& = R_n(\theta, {\hat p}) - R_n({\theta^\star}, {\hat p}) \\
    &+ R_n(\theta) - R_n({\theta^\star}) - [R_n(\theta) - R_n({\theta^\star})]\\
    & = [R_n(\theta, {\hat p})-R_n(\theta)] + [R_n({\theta^\star})-R_n({\theta^\star}, {\hat p})]\\
    &+R_n(\theta) - R_n({\theta^\star})\\
    & \gtrsim -m_nn^{-1/2} + R_n(\theta) - R_n({\theta^\star}).
\end{align*}
However, Lemma~\ref{lem:num} lower bounds $R_n(\theta) - R_n({\theta^\star})$ by $K(M_n\eps_n)^\gamma$ for some $K>0$ for all $\|\theta-{\theta^\star}\|>M_n\eps_n$.  Therefore, since $m_n$ is arbitrary and $\eps_n^\gamma \gtrsim n^{-1/2}$ we obtain the same bound as in Lemma~\ref{lem:num}, that is, for some $K'>0$
\[ P\Bigl(\sup_{\|\theta-{\theta^\star}\| > M_n\eps_n} \{R_n({\theta^\star}, {\hat p}) - R_n(\theta, {\hat p}) \} > -K' (M_n\eps_n)^{\gamma} \Bigr) \to 0, \quad \text{as $n \to \infty$}. \]
The rest of the proof proceeds exactly as in the proof of Theorem~4.1 a).
\end{proof}

\section{Proof of Theorem 4.2}
\subsection{Preliminary results}
Proposition~\ref{prop:gibbs_cons} below is reproduced from \citet{syring.2017}.
\begin{assump}
\label{assump:cons}
\begin{align}
\label{conds:Gibbs_cons}
&\sup_{\theta \in\Theta} \left|R_n(\theta)-R(\theta) \right|\rightarrow 0 \textrm{ in }P^n-\textrm{probability};\\
&\sup_{\|\theta-\theta^\star\|> \epsilon}R(\theta^\star)-R(\theta)<-\delta(\eps);\\
& \Pi(\{\theta: \omega R(\theta) - \omega R(\theta^\star)|\leq \alpha \})\gtrsim e^{-n\alpha}.
\end{align}    
\end{assump}
\begin{proposition}
\label{prop:gibbs_cons}
If Assumption~\ref{assump:cons} holds, then $\Pi_n(\{\theta: \|\theta-\theta^\star\| > \eps\}) \rightarrow 0$ in $P^n-$probability for any $\eps>0$.
\end{proposition}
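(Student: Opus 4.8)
The plan is to run the standard Schwartz-type consistency argument for generalized posteriors. Writing $A_n := \{\theta : \|\theta - \theta^\star\| > \eps\}$ and cancelling the common factor $e^{\omega n R_n(\theta^\star)}$ from numerator and denominator, I would express the Gibbs posterior probability as a ratio
\[
\Pi_n(A_n) = \frac{N_n}{D_n},
\qquad
N_n = \int_{A_n} e^{-\omega n[R_n(\theta) - R_n(\theta^\star)]}\,\Pi(d\theta),
\qquad
D_n = \int_\Theta e^{-\omega n[R_n(\theta) - R_n(\theta^\star)]}\,\Pi(d\theta).
\]
Fix $\eps > 0$ and let $\delta = \delta(\eps) > 0$ be the separation gap supplied by the second condition of Assumption~\ref{assump:cons}. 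The whole argument takes place on the event $E_n := \{\sup_{\theta}|R_n(\theta) - R(\theta)| \leq \xi\}$ for a small slack $\xi > 0$ to be pinned down later; the uniform convergence condition guarantees $P^n(E_n) \to 1$. The aim is to bound $N_n$ above and $D_n$ below on $E_n$ and show their ratio decays exponentially.

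For the numerator, the separation condition gives $R(\theta) - R(\theta^\star) > \delta$ for every $\theta \in A_n$, so on $E_n$ the estimates $R_n(\theta) \geq R(\theta) - \xi$ and $R_n(\theta^\star) \leq R(\theta^\star) + \xi$ yield $R_n(\theta) - R_n(\theta^\star) > \delta - 2\xi$, whence $N_n \leq e^{-\omega n(\delta - 2\xi)}$ because $\Pi(A_n) \leq 1$. For the denominator I would restrict the integral to the neighborhood $B_\alpha := \{\theta : |\omega R(\theta) - \omega R(\theta^\star)| \leq \alpha\}$ appearing in the prior mass condition, on which $R(\theta) - R(\theta^\star) \leq \alpha/\omega$. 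The same uniform tolerance then gives $R_n(\theta) - R_n(\theta^\star) \leq \alpha/\omega + 2\xi$ on $E_n$, so
\[
D_n \;\geq\; \Pi(B_\alpha)\, e^{-n\alpha - 2\omega n\xi} \;\gtrsim\; e^{-2n\alpha - 2\omega n\xi},
\]
using the prior mass bound $\Pi(B_\alpha) \gtrsim e^{-n\alpha}$.

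Combining the two bounds, on $E_n$ one obtains
\[
\Pi_n(A_n) \;\lesssim\; e^{-n[\omega\delta - 4\omega\xi - 2\alpha]}.
\]
The decisive step is choosing the two free constants so that the bracketed exponent is strictly positive: taking $\xi = \delta/16$ and $\alpha = \omega\delta/8$ makes $4\omega\xi + 2\alpha = \omega\delta/2 < \omega\delta$, so $\Pi_n(A_n) \lesssim e^{-n\omega\delta/2} \to 0$ on $E_n$. The in-probability conclusion then follows by splitting, for any $\eta > 0$, $P^n(\Pi_n(A_n) > \eta) \leq P^n(E_n^{\comp}) + P^n(\{\Pi_n(A_n) > \eta\} \cap E_n)$ and letting $n \to \infty$. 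The main obstacle here is not any deep technical point but the exponent bookkeeping: one must keep the fixed separation rate $\omega\delta$ from the numerator strictly larger than the total slack $4\omega\xi + 2\alpha$ coming from the uniform-convergence tolerance and the prior-mass radius in the denominator, and correctly translate the prior-mass set $\{|\omega R - \omega R(\theta^\star)| \leq \alpha\}$ into the bound $R(\theta) - R(\theta^\star) \leq \alpha/\omega$ that drives the lower bound on $D_n$.
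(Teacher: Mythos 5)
Your proof is correct, and it follows the same Schwartz-type skeleton as the paper: split $\Pi_n(A_n)$ into a numerator controlled by the separation condition and a denominator controlled by the prior-mass condition, with the uniform convergence of $R_n$ to $R$ supplying the slack in both. The differences are in execution, and they work in your favor. First, by normalizing both integrals by $e^{\omega n R_n(\theta^\star)}$ at the outset, you only ever deal with the differences $R_n(\theta)-R_n(\theta^\star)$; the paper instead keeps $R_n(\theta)$ unshifted and consequently has to assume (or arrange, by adding a constant to the loss) that $R(\theta^\star)\geq 0$ before multiplying the denominator by $e^{n(\omega R(\theta^\star)+\alpha)}$ --- a step your normalization makes unnecessary. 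Second, for the denominator the paper restricts to $\{\theta:\omega R(\theta)-\omega R(\theta^\star)\leq\alpha/2\}$ and then invokes Fatou's lemma on $\int \exp(-n[\omega R_n(\theta)-\omega R(\theta)])\,d\Pi(\theta)$, with an in-probability tolerance $\delta$ that ``vanishes as $n\to\infty$''; applying Fatou to an $n$-dependent exponent under convergence in probability is the loosest point of the published argument, since $n\sup_\theta|R_n(\theta)-R(\theta)|$ need not be controlled without a rate. Your version sidesteps this entirely by working on the explicit event $E_n=\{\sup_\theta|R_n(\theta)-R(\theta)|\leq\xi\}$ with a \emph{fixed} tolerance $\xi$ calibrated to the fixed gap $\delta(\eps)$ --- fixed tolerances are all that is needed here precisely because the target is consistency at a fixed $\eps$, not a rate --- and your bookkeeping ($\xi=\delta/16$, $\alpha=\omega\delta/8$, leaving exponent $-n\omega\delta/2$) is explicit where the paper's choice $\alpha<\omega\eta/2$ is only sketched. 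The net effect is the same conclusion via the same three conditions in the same roles, but your denominator bound is more elementary and airtight, and it yields an explicit exponential decay $\Pi_n(A_n)\lesssim e^{-n\omega\delta/2}$ on an event of probability tending to one.
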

\begin{proof}

Write the Gibbs posterior probability of the complement of the set $A = \{\theta\in\Theta: \|\theta-\theta^\star\|<\epsilon\}$,
\[\Pi_n(A^c) = \frac{\int_{A^c} \exp(-\omega R_n(\theta))d\Pi(\theta)}{\int_{\Theta} \exp(-\omega R_n(\theta))d\Pi(\theta)}\]
for some $\epsilon > 0$.    

First, bound the denominator from below as follows.  We assume the risk function $R(\theta^\star):=E(\ell_{\theta^\star})\geq 0$, but if it is bounded below by a negative number we may implicitly add an arbitrary constant to the loss function so that $R(\theta^\star)$ is positive.  Multiply the denominator, denoted $D_n$, by $e^{n(\omega R(\theta^\star)+\alpha)}$ for a positive constant $\alpha$,
\[e^{n(\omega R(\theta^\star)+\alpha)}D_n = e^{n(\omega R(\theta^\star)+\alpha)}\int_{\Theta}\exp(-n\omega R_{n}(\theta))d\Pi(\theta).\]
Bound this product from below by restricting the domain of integration,
\[e^{n(\omega R(\theta^\star)+\alpha)}D_n \geq \int_{\{\theta: \omega R(\theta)-\omega R(\theta^\star)\leq \alpha/2\}}\exp[-n\omega (R_{n}(\theta)-\omega R(\theta^\star)-\alpha)]d\Pi(\theta)\]
Add and subtract $R(\theta)$ in the exponent of the integrand, and apply the inequality in the domain of integration to get
\begin{align*}
&\int_{\{\theta: \omega R(\theta) - \omega R(\theta^\star)\leq \alpha/2 \}}\exp(-n[\omega R_n(\theta)- \omega R(\theta) + \omega R(\theta) - \omega R(\theta^\star) - \alpha])d\Pi(\theta)\\
&\geq e^{n\alpha/2}\int_{\{\theta: \omega R(\theta) - \omega R(\theta^\star)\leq \alpha/2 \}}\exp(-n[\omega R_n(\theta)- \omega R(\theta)])d\Pi(\theta).
\end{align*}
Since the above integrand is non-negative, use Fatou's Lemma to evaluate the limit
\begin{align*}
&\lim \inf_{n\rightarrow \infty} \int_{\{\theta: \omega R(\theta) - \omega R(\theta^\star)\leq \alpha/2 \}}\exp(-n[\omega R_n(\theta) - \omega R(\theta)])d\Pi(\theta) \\
&\geq \int_{\{\theta: \omega R(\theta) - \omega R(\theta^\star)\leq \alpha/2 \}} \lim \inf_{n\rightarrow \infty} \exp(-n[\omega R_n(\theta) - \omega R(\theta)])d\Pi(\theta)\\
&\geq \int_{\{\theta: \omega R(\theta) - \omega R(\theta^\star)\leq \alpha/2 \}} \exp(-\lim \sup_{n\rightarrow \infty} n\omega |R_n(\theta) - R(\theta)|)d\Pi(\theta)
\end{align*}
With this limit the denominator may be bounded from below by
\[D_n\geq e^{-n\omega \delta}\Pi(\{\theta: \omega R(\theta) - \omega R(\theta^\star)\leq \alpha/2 \})\gtrsim e^{-n(\omega\delta+\alpha/2)} \]
in $P^n-$probability and where $\delta>0$ vanishes as $n\rightarrow\infty$.  Since $\alpha>0$ is arbitrary $e^{n(\omega R(\theta^\star)+\alpha)}D_n$ diverges in $P^n-$probability as $n\rightarrow \infty$.  Hence, $D_n$ is bounded below by $Ce^{-n(\omega R(\theta^\star)+\alpha)}$ in $P^n-$probability for some $C>0$.

Next, bound the numerator from above.  Write the numerator of $\Pi_n(A)$ as
\[N_n(A) = \int_{\{\theta: \|\theta-\theta^\star\| > \epsilon\}} \exp(-n[\omega R_n(\theta)])d\Pi(\theta).\]
Add and subtract $R(\theta)$ from the exponent in the numerator to obtain
\[\int_{\{\theta: \|\theta-\theta^\star\| > \epsilon\}} \exp(-n\omega[R_n(\theta) - R(\theta) + R(\theta)])d\Pi(\theta).\]
By \eqref{conds:Gibbs_cons}, $R_n(\theta) - R(\theta)$ can be bounded uniformly over the set of integration by $\delta>0$ in $P-$probability and $R(\theta)>R(\theta^\star) + \eta$ for some $\eta(\epsilon) > 0$.  Then,
\[N_n(A) \leq e^{-n\omega[-\delta + R(\theta^\star) + \eta]}.\]

Combining the bounds on numerator and denominator, 
\begin{align*}
\Pi_n(A) = \frac{N_n(A)}{D_n} &\lesssim \frac{e^{-n\omega[-\delta + R(\theta^\star) + \eta]}}{e^{-n(\omega R(\theta^\star)+\alpha)}}\\
&= e^{n\omega\delta + n\alpha - n\omega\eta}.
\end{align*}
As $n\rightarrow \infty$, $\delta$ vanishes, but $\eta>0$ is a fixed value dependent on $\epsilon$.  So, if $\alpha$ is chosen so that $\alpha < \omega\eta/2$ the upper bound is no larger than $e^{-n\omega\eta/2}$ for all sufficiently large $n$, and vanishes in $P^n-$probability as $n \rightarrow \infty$.   
\end{proof}

\subsection{Proof of Theorem~4.2}
\label{proof:thm2}

We prove Theorem~4.2 by checking the three conditions in Assumption~\ref{assump:cons} and applying Proposition~\ref{prop:gibbs_cons}.  Assume the probabilities $p_1$ and $p_{-1}$ are known; we can remove this assumption by essentially the same argument as in the proof of Theorem~4.1 b). 

First, by the triangle inequality, (2), and our definition of $d$ we have
\begin{align*}
\|{\beta}^\top {B_d} - \theta^\star\| &\leq \|{\beta}^\top {B_d} - {{\beta_d^\star}}^\top {B_d}\|+\|{{\beta_d^\star}}^\top {B_d} - \theta^\star\|\\
&\leq \|{\beta}^\top {B_d} - {{\beta_d^\star}}^\top {B_d}\|+\eps/2.
\end{align*}
Therefore the set $A_n(\eps)$ is contained in the set $\{\beta\in\mathbb{R}^d: \|{\beta}^\top {B_d} - {{\beta_d^\star}}^\top {B_d}\| > \eps/2\}$, and it is sufficient to show $\Pi_n\{\beta\in\mathbb{R}^d: \|{\beta}^\top {B_d} - {{\beta_d^\star}}^\top {B_d}\| > \eps\}]\rightarrow 0$ in $P^n-$probability for any $\eps>0$.

We begin by verifying (12) using the lower bound on a prior probability given in Assumption 4.2 i.  Define the function
\[\ell_{\beta}(x, y, z) = \frac{1(x\leq {\beta}^\top {B_d(z)}, \, y=1)}{p_1} - \frac{1(x\leq {\beta}^\top {B_d(z)}, \, y=-1)}{p_{-1}},\]
and its expectation $R({\beta}) = E(\ell_{\beta}(X, Y, Z))$. Then, we want to show
\[\Pi(\{\theta:\omega R(\beta) - R({\beta_d^\star})\leq \alpha\})\gtrsim e^{-n\alpha}.\]
Assumption~\ref{assump:thm2} ii. implies
\[R(\beta) - R({\beta_d^\star})\lesssim \|{\beta}^\top {B_d}-{{\beta_d^\star}}^\top{B_d}\|,\]
where
\[\|{\beta}^\top {B_d}-{{\beta_d^\star}}^\top{B_d}\|:=\int_\mathbb{Z} \left|{\beta}^\top {B_d(z)}-{{\beta_d^\star}}^\top{B_d(z)}\right|dz.\]
Define the $\sup-$norm balls
\[C_d(\eps):=\{\beta\in\mathbb{R}^d: \|{\beta}^\top {B_d} - {{\beta_d^\star}}^\top {B_d}\|_\infty<\eps\}.\]
Then, $\beta\in C_d(\eps)$ implies $\|{\beta}^\top {B_d}-{{\beta_d^\star}}^\top {B_d}\|< \eps$.  And, using Assumption~\ref{assump:thm2} i. and the fact that
\[\|{\beta}^\top {B_d} - {{\beta_d^\star}}^\top {B_d}\|_\infty\lesssim d\|\beta-{\beta_d^\star}\|_2,\]
we have the following prior bound:
\begin{align*}
    \Pi(C_d(\eps)) &\geq \Pi(\{\beta:\|\beta-{\beta_d^\star}\|_2 \leq c_1d^{-1}\eps\})\\
    &\geq e^{-c_2d\log(d/(c_1\eps))}
\end{align*}
for some constants $c_1, \,c_2>0$; see also the proof of Theorem~1 in \citet{shen.ghosal}.  Working backwards, we have shown
\[\Pi(\{\beta\in\RR^d:R({\beta}) - R({\beta_d^\star}) < \eps\})\gtrsim e^{-c_2d\log(d/(c_1\eps))}>e^{-n\eta}\]
for any $\eta>0$ for all large enough $n$.  

Next, we note that (11) is implied by Assumption~\ref{assump:thm2} ii.  By definition of the norm $\|\cdot\|$ and by lower bounding the densities $f$ and $g$ away from zero we have
\begin{align*}
R({\beta}) - R({\beta_d^\star}) &=\int_\mathcal{X}\int_0^1 |{\beta}^\top {B_d}(z)-{{\beta_d^\star}}^\top {B_d}(z)|f(x)dxg(z)dz\\
&\gtrsim \int_0^1 |{\beta}^\top {B_d}(z)-{{\beta_d^\star}}^\top {B_d}(z)|dz\\
&=: \|{\beta}^\top {B_d}-{{\beta_d^\star}}^\top {B_d}\|.\end{align*}
Therefore,
\[\|{\beta}^\top {B_d}-{{\beta_d^\star}}^\top {B_d}\| > \eps \Rightarrow R({\beta}) - R({\beta_d^\star})\gtrsim \eps.\]

Last, we verify (10).  Define the class of functions $\mathcal{F}:=\{\sign({\beta}^\top {B_d}(z) - x): \beta\in \mathbb{R}^d, x\in \mathcal{X}\}$.  A classic result about linear classifiers says $\mathcal{F}$ has Vapnik-Chervonenkis (VC) dimension less than $d+2$; see, e.g., Example 19.17 in \citet{vaart}.  Let $f_{\beta}(X,Z):=\sign({\beta}^\top {B_d}(Z) - X)$ denote an element of $\mathcal{F}$. The VC Inequality \citep[see, e.g., Theorem 12.5 in ][]{devroye} along with the Sauer-Shelah Lemma \citep{sauer, shelah} provides the uniform probability bound
\[P\left(\sup_{\beta}\left|\tfrac1n\sum_{i=1}^n 1[Y_i\ne f_{\beta}(X_i,Z_i)] - E\{1[Y\ne f_{\beta}(X,Z)]\}\right|>\eps\right)\lesssim n^{d+1}e^{-\frac{n\eps^2}{32}}.\]
With very minor modifications the above bound applies to the functions $1[Y_i = 1, f_{\beta}(X_i,Z_i) = 1]$ and $1[Y_i = -1, f_{\beta}(X_i,Z_i) = 1]$.  Therefore, define the functions
\begin{align*}
    R_n(\beta) &= \frac{1}{p_1}R_{n,1}(\beta) - \frac{1}{p_{-1}}R_{n,2}(\beta)\\
    &= \frac{1}{p_1}\frac{1}{n}\sum_{i=1}^n1[Y_i = 1, f_{\beta}(X_i,Z_i) = 1] - \frac{1}{p_{-1}}\frac{1}{n}\sum_{i=1}^n1[Y_i = -1, f_{\beta}(X_i,Z_i) = 1],
\end{align*}
where $E[R_n(\beta)] = R(\beta) =: \frac{1}{p_1}R_1(\beta)-\frac{1}{p_{-1}}R_2(\beta)$, correspondingly. Then, 
\begin{align*}P(\sup_{\beta}|R_n(\beta) - R(\beta)|>\eps) &\leq P(\sup_{\beta}|R_{n,1}(\beta) - R_1(\beta)|>p_1\eps/2)\\&+P(\sup_{\beta}|R_{n,2}(\beta) - R_2(\beta)|>p_{-1}\eps/2).\end{align*}
The two probabilities on the right hand side of the above display are bounded by the VC Inequality and the Sauer-Shelah Lemma, so that the left hand side has the following bound:
\[P(\sup_{\beta}|R_n(\beta) - R(\beta)|>\eps)\lesssim n^{d+1}e^{-n\eps^2\max\{p_1^2, \, p_{-1}^2\}/32}.\]
The bound vanishes as $n\rightarrow\infty$, verifying (10).

Finally, if the probabilities $p_1$ and/or $p_{-1}$ are unknown, then the corresponding objective function 
\[R_n({\beta},{\hat p}):=\frac{1}{\hat p_1}\frac{1}{n}\sum_{i=1}^n1[Y_i = 1, f_{\beta}(X_i,Z_i) = 1] - \frac{1}{\hat p_{-1}}\frac{1}{n}\sum_{i=1}^n1[Y_i = -1, f_{\beta}(X_i,Z_i) = 1]\] is within $m_nn^{-1/2}$ of $R_n({\beta})$ for any diverging sequence $m_n>0$ with $P^n-$probability tending to $1$, by the same argument as in the proof of Theorem~4.1 b).  Applying the VC Inequality and Sauer-Shelah Lemma, we have
\[P(\sup_{\beta}|R_n(\beta;\hat p) - R(\beta)|>\eps-m_nn^{-1/2})\lesssim n^{d+1}e^{-n\eps^2\max\{p_1^2, \, p_{-1}^2\}/32}+o(1),\]
and, for all large enough $n$ such that $\eps-m_nn^{-1/2}>\eps/2$
\[P(\sup_{\beta}|R_n(\beta;\hat p) - R(\beta)|>\eps/2)\rightarrow 0,\]
as $n\rightarrow\infty$.  Therefore, the result of Theorem~\ref{thm:2} holds whether $p_1$ and $p_{-1}$ are known or replaced by the corresponding sample proportions.

\end{document}